\numberwithin{equation}{section}
\newcommand{\B}{\textbf}
\newcommand{\be}{\begin{equation}}
\newcommand{\ee}{\end{equation}}
\newcommand{\ba}{\begin{eqnarray}}
\newcommand{\ea}{\end{eqnarray}}
\newcommand{\ot}{\otimes}
\newcommand{\trace}{\mathrm{Tr}}
\newcommand{\absval}[1]{\left| {#1} \right|}
\newcommand{\exv}{\mathbb{E}}
\newcommand{\lin}{\operatorname{lin}}
\newcommand{\wg}{\text{W\!g}}
\newcommand{\cmmnt}[1]{}
\newtheorem{theorem}{Theorem}
\begin{document}
\setcounter{secnumdepth}{3}

\title{Stabilizer Entropy of Subspaces}

\author{Simone Cepollaro}
\email{simone.cepollaro-ssm@unina.it}
\affiliation{Scuola Superiore Meridionale,  Largo S. Marcellino 10, 80138 Napoli, Italy}
\affiliation{Department of Mechanical Engineering, Massachusetts Institute of Technology, 77 Massachusetts Ave, Cambridge, MA 02139, USA}
\affiliation{INFN, Sezione di Napoli, Italy}

\author{Gianluca Cuffaro}\email{gianluca.cuffaro001@umb.edu}
\affiliation{Physics Department,  University of Massachusetts Boston,  02125, USA}

\author{Matthew B. Weiss}\email{matthew.weiss001@umb.edu }
\affiliation{Physics Department,  University of Massachusetts Boston,  02125, USA}

 \author{Stefano Cusumano}
 \affiliation{INFN, Sezione di Napoli, Italy}
 \affiliation{Dipartimento di Fisica `Ettore Pancini', Universit\`a degli Studi di Napoli Federico II,
 Via Cintia 80126,  Napoli, Italy}

 \author{Alioscia Hamma}
 \affiliation{Scuola Superiore Meridionale,  Largo S. Marcellino 10, 80138 Napoli, Italy}
 \affiliation{INFN, Sezione di Napoli, Italy}
 \affiliation{Dipartimento di Fisica `Ettore Pancini', Universit\`a degli Studi di Napoli Federico II,
 Via Cintia 80126,  Napoli, Italy}

\author{Seth Lloyd}
\affiliation{Department of Mechanical Engineering, Massachusetts Institute of Technology, 77 Massachusetts Ave, Cambridge, MA 02139, USA}

\begin{abstract}
We consider the costs and benefits of embedding the states of one quantum system within those of another.  Such embeddings are ubiquitous, e.g., in error correcting codes and in symmetry-constrained systems. In particular we investigate the impact of embeddings in terms of the resource theory of nonstabilizerness (also known as \emph{magic}) quantified via the \emph{stabilizer entropy} (SE). We analytically and numerically study the \emph{stabilizer entropy gap} or \emph{magic gap}: the average gap between the SE of a quantum state realized within a subspace of a larger system and the SE of the quantum state considered on its own.  We find that while the stabilizer entropy gap is typically positive, requiring the injection of magic, both zero and negative magic gaps are achievable. This suggests that certain choices of embedding subspace provide strong resource advantages over others. We provide formulas for the average nonstabilizerness of a subspace given its corresponding projector and sufficient conditions for realizing zero or negative gaps: in particular, certain classes of stabilizer codes provide paradigmatic examples of the latter. Through numerical optimization, we find subspaces which achieve both minimal and maximal average SE for a variety of dimensions, and compute the magic gap for specific error-correcting codes and symmetry-induced subspaces. Our results suggest that a judicious choice of embedding can lead to greater efficiency in both classical and quantum simulations.
\end{abstract}
\maketitle

\section{Introduction}

It is often profitable to embed the states and dynamics of one quantum system into the those of another. For example, any fault-tolerant quantum computer ought to employ error-correcting codes whereby logical degrees of freedom are encoded in a noise-resistant way within the physical degrees of freedom \cite{Nielsen_Chuang_2010}. Moreover, most current proposals for quantum computing architectures are based on qubits as the fundamental building block since implementing generic qudit gates and performing qudit measurement and read-out represents a significant challenge for current quantum hardware \cite{meth_simulating_2025,PRXQuantum.5.040309,PhysRevResearch.6.013050}. Thus if one wishes to simulate systems of higher dimensionality, one will generally need to encode such qudits in a particular subspace of the $n$-qubit Hilbert space. A physically relevant case is that of symmetries or gauge structures, which are imposed by a projection from a total Hilbert space to a subspace: for example, the low-energy sector of a quantum system, the subspace of states invariant under the symmetry of a Hamiltonian, or the gauge-invariant space in a lattice gauge theory \cite{zeng2019quantum,kitaev_anyons_2006,PhysRevA.72.012324}.

The choice of such a subspace is constrained by the problem considered. Depending on the architecture of a given quantum computer, a particular error correcting code might be especially natural to implement. In the context of quantum simulation, a choice of subspace may be picked out by the physics of the system under study. For instance, quantum computing \cite{lloyd1993potentially,cirac2000scalable} has recently emerged as a promising platform for the simulation of lattice gauge theories (LGT) \cite{santra2025quantumresourcesnonabelianlattice}, which provide a non-perturbative framework for studying quantum field theories  through a Hamiltonian formulation and are crucial tools for understanding strongly interacting systems, e.g., in quantum chromodynamics \cite{savage2016nuclearphysics,constantinou2015recentprogresshadronstructure,Meyer_2019}. Simulating LGTs requires exponential classical computational resources \cite{PhysRevResearch.5.043128,illa2025dynamicallocaltadpoleimprovementquantum,dimarcantonio2025rougheningdynamicselectricflux,PhysRevLett.131.171902,Xu:2025abo}, therefore recent works \cite{PhysRevD.108.074503,PhysRevD.102.074501,Robaina_2021,Bañuls:20198n,banuls_simulating_2020,nt76-ttmj,calliari2025fielddigitizationscalingmathbbzn,Joshi:2025rha,Joshi:2025pgv,tian2025roleplaquettetermgenuine} have proposed methods and ideas for real-time quantum LGT simulations. 

On the other hand, one may have some freedom to pick a subspace, and in this case, one may be guided by resource-theoretic considerations. In this paper, we will consider the interplay between the choice of encoded subspace and a particular quantum resource, \emph{nonstabilizerness}, commonly referred to as \emph{magic}, which quantifies how far a quantum state is from the set of stabilizer states \cite{Gottesman_98,Aaronson_04}. By the Gottesman-Knill theorem, the fragment of quantum theory consisting of stabilizer states, stabilizer-preserving (Clifford) operations, and stabilizer measurements is efficiently classically simulatable: consequently, the presence of magic strongly affects e.g., the number of samples necessary to estimate expectation values \cite{gottesman1998heisenbergrepresentationquantumcomputers, PhysRevLett.118.090501, PhysRevLett.115.070501}. At the same time, in the \emph{magic state model} of quantum computing \cite{Veitch2014}, stabilizer states and operations are taken to be ``free'' operations, and computational universality is achieved precisely by injecting a supply of nonstabilizer states. From a resource theory point of view, the cost of a quantum computation can be quantified in terms of the number of standardized magic states which must be injected. Given the central role of nonstabilizerness as a resource for universal quantum computation, a variety of measures of nonstabilizerness have been proposed in the literature \cite{PRXQuantum.4.010301, PhysRevLett.124.090505, Peleg_2022, Wang_2019}. We adopt the \emph{stabilizer entropy} (SE) \cite{Leone_2022}, an entropic measure which is the essentially unique computable measure of nonstabilizerness for pure states \cite{PhysRevA.110.L040403} and for which efficient numerical algorithms are available \cite{PhysRevA.109.L040401,PhysRevLett.132.240602}. The SE can be experimentally measured on quantum processors \cite{oliviero_measuring_2022,ahmad2025experimentaldemonstrationnonlocalmagic}, in particular via an efficient quantum algorithm \cite{Haug_2024}. The SE has highlighted nonstabilizerness as a crucial quantum resource, shedding light on its key role in the complex behavior of chaotic quantum systems \cite{jasser2025stabilizerentropyentanglementcomplexity,odavić2025stabilizerentropynonintegrablequantum}, allowing a comprehensive study of the interplay between entanglement and nonstabilizerness \cite{iannotti2025entanglementstabilizerentropiesrandom}, and deepening our understanding of Clauser–Horne–Shimony–Holt (CHSH) inequality violations \cite{cusumano2025nonstabilizernessviolationschshinequalities}.

The central question we consider here is: how does the nonstabilizerness required for quantum computation change as a function of  the choice of embedding subspace? Practically speaking, we try to answer the question: which choice of subspace is optimal from the point of view of the resource theory of magic? On the one hand, given a state in a Hilbert space of dimension $d_S$, one may calculate its stabilizer entropy with respect to a choice of Weyl-Heisenberg operators (in terms of which the stabilizer states themselves are defined): we shall call this its \emph{intrinsic} stabilizer entropy. On the other hand, we may reinterpret the same state as living a particular $d_S$-dimensional subspace of a larger $d_B$ dimensional Hilbert space and calculate its \emph{extrinsic} stabilizer entropy with respect to a choice of WH operators on the larger Hilbert space. The difference between the extrinsic and intrinsic stabilizer entropies we term the \emph{magic gap} of the embedding, a concept originally introduced in \cite{PhysRevD.109.126008,doi:10.1142/S0219887825500033}. A positive magic gap signifies that preparing the encoded system costs more than preparing the system on its own; zero magic gap means that there is no additional cost (at least in terms of magic) to preparing the encoded system; and most tantalizingly, a negative magic gap would suggest that one actually does better by simulating a system as a part of a larger whole than by treating it on its own. In order to isolate the effect of the choice of subspace, we investigate the \emph{average magic gap}: the intrinsic stabilizer entropy may be averaged (with respect to Haar measure) over all states in $\mathcal{H}_{d_S}$, and this can be compared to extrinsic stabilizer entropy averaged over all states in the chosen $d_S$-dimensional subspace of $\mathcal{H}_{d_B}$. The resulting gap diagnoses the resource cost associated to the choice of subspace itself. 

The paper is organized as follows: in \cref{II}, we review the stabilizer formalism for finite dimensional Hilbert spaces, characterizing stabilizer states and the Clifford group for qudits, and introduce the measures of nonstabilizerness employed throughout the work---namely, the stabilizer entropy and its linear variant. In \cref{ase_gap}, we calculate exactly the average intrinsic stabilizer entropy of an arbitrary Hilbert space in both multiqubit and multiqudit scenarios, and we provide an explicit, computationally tractable formula for the average extrinsic stabilizer entropy of a subspace expressed in terms of the characteristic function of the projector. \cref{overall} proves that the average magic gap itself averaged over all choices of subspace of fixed dimension is just the average magic of the larger Hilbert space. In \cref{overall} we report the results of an extensive numerical search for subspaces of minimal and maximal average magic for different choices of subspace dimension and larger Hilbert space dimension, and prove sufficient conditions for the average magic gap to be zero or negative in the case that the subspace corresponds to a stabilizer code (with trivial group homomorphism). In fact, we show that any such $n$-qubit stabilizer code which encodes $m$-qubits must have zero average magic gap, or negative average magic gap if the encoded system is viewed as a single qudit. Zero average magic gap is also always attainable if the overall dimension is odd, and under certain conditions, average magic gap $\leq 0$ is achievable in even qudit dimensions.

In \cref{complement}, we consider the possibility that a given state might attain lower stabilizer entropy by adding support in the complement of the chosen subspace, and calculate the average SE in the case that this complementary support is optimized for each state in the subspace, and in the case that a single support in the complement is chosen optimally for the given subspace. We provide numerical evidence that optimizing over a fixed support generally provides an advantage for a generic subspace, but not necessarily for a subspace which already achieves minimal average SE. Finally, in \cref{examples} we calculate the average magic gap for several interesting subspaces: the ground state space of a frustration-free Hamiltonian; the encoding of a spin-$j$ system in a state of $2j$-symmetrized qubits; the encoding of a qudit into an SU$(2)$ invariant subspace, relevant to the spin network construction of quantum gravity and SU$(2)$ gauge theories more generally; the invariant subspace of a $\mathbb{Z}_d$ gauge theory; and a [[4,2,2]] stabilizer code, which achieves zero magic gap. \cref{overall} provides a brief overview of the numerical methods used in this paper. In particular, we have developed an efficient open-source \texttt{python} package which implements all of the constructions in the paper, including exact and Monte Carlo approximations of the average magic gap.

We hope that the present work opens the door to a systematic resource-theoretic analysis of the choice of embeddings used in quantum computing and beyond. In particular, the fact that a negative magic gap is generically attainable for certain combinations of small and large Hilbert space dimensions raises some profound questions of practical importance. A negative magic gap implies that, on average, fewer resources are required to simulate a system through an embedding than to simulate the system in itself---at least by the lights of the resource theory of magic. Is the true cost borne elsewhere, e.g., in terms of the entanglement necessary, or in terms of the cost of the encoding and decoding maps? In what situations can one truly take advantage of this result? Moreover, an examination of the dimensions in which zero and negative magic gap are attainable raises interesting questions at the intersection of number theory and quantum information: we go partway towards resolving such questions, but the full story awaits its unfolding.

\section{Stabilizer Formalism and Stabilizer Entropy} \label{II}
\subsection{The Weyl-Heisenberg group}

For any natural number $d$, let $\mathbb{Z}_d=\{0,\dots,d-1\}$ be the set of integers modulo $d$. In order to construct a unitary representation of the group $\mathbb{Z}_d\times \mathbb{Z}_d$ on a $d$-dimensional Hilbert space $\mathbb{C}^d$,  we first define an orthonormal basis $\{\ket{k}\,|\,k\in\mathbb{Z}_{d}\}$ on which we let clock and shift operators act respectively as
\begin{align}
    Z\ket{k}&\equiv\omega^k\ket{k}\\
    X\ket{k}&\equiv\ket{k+1 \text{ mod } d},
\end{align}
where $\omega=e^{2\pi i /d}$. We note the clock and shift operators satisfy
\begin{equation}
    X^d=Z^d=I\quad\quad\quad\quad X^kZ^l=\omega^{-kl}Z^lX^k.
\end{equation}
For each pair of integer $\textbf{a}=(a_1,a_2)\in\mathbb{Z}_d\times\mathbb{Z}_d$ we may define the displacement operators as
\begin{equation}
    D_\textbf{a}\equiv \tau^{a_1 a_2}X^{a_1}Z^{a_2},
\end{equation}
where $\tau=-e^{i\pi/d}$. Notice that $\tau^d = 1$ if $d$ is odd, but $\tau^d=-1$ if $d$ is even: regardless, $\tau^{2d}=1$. Thus an important subtlety arises in even dimensions, where the phase factor has a $2d$ periodicity, and so indices to the displacement operators ought to be taken mod $2d$ instead of mod $d$. 

Using the properties of the clock and shift operators, it is straightforward to check that the displacement operators satisfy a number of important properties \cite{appleby2005}: 
\begin{align}
    &D_{\textbf{a}}^\dagger=D_{-\textbf{a}},\label{dagger}\\
    &D_\textbf{a}D_\textbf{b}=\tau^{[\textbf{a,\textbf{b}}]}D_{\textbf{a}+\textbf{b}}\label{HEalgebra}\\
    &\trace{(D_\textbf{a}^\dagger D_\textbf{b})}=d\,\delta_{\textbf{ab}}\label{orth},\\
    &D_{\B{a}}=D_{\textbf{x}+d\textbf{y}}=\begin{dcases}
        D_\textbf{x} &\text{$d$ odd}\\ (-1)^{[\textbf{x},\textbf{y}]}D_\textbf{x}\quad &\text{$d$ even,}
        \end{dcases}\label{evenodd}
\end{align}
where $[\textbf{a},\textbf{b}]=a_2 b_1 - a_1 b_2$ is a symplectic form defined on $\mathbb{Z}_d\times\mathbb{Z}_d$.   Eq.\! \ref{orth} establishes that the displacement operators form an orthogonal basis of unitary operators, and in interpreting Eq.\! \ref{evenodd}, we note that we may always decompose any symplectic index as $\B{a}=\B{x}+d\B{y}$ where $\B{x}=\B{a} \text{ mod } d$ and $\B{y}=\frac{1}{d}(\B{a}-\B{x})$, which provides an alternative to considering indices mod $2d$. (Intuitively, $\B{x}$ tells us how many ``minutes'' to go around each clock, and $\B{y}$ tells us how many ``hours'' have passed.)

The set of operators $\{D_\textbf{a}\}$, together with a phase factor $\omega^s$, generate the single qudit Weyl-Heisenberg (WH) group $\mathcal{W}_d$ 
\begin{equation}
    \mathcal{W}_d\equiv\{\omega^sD_\textbf{a}\}\,,\quad\quad s\in\mathbb{Z},\,\,\textbf{a}\in\mathbb{Z}_d^2.
\end{equation} 
The same structure can be easily extended to multiqudit systems. We take as our Hilbert space $(\mathcal{H}_d)^{\otimes n}$, where  $d$ is the local dimension, and $n$ is the number of qudits. We consider symplectic indices valued in $\mathbb{Z}_d^{2n}$, that is, $\B{a}=\B{a}_1 \oplus \dots \oplus \B{a}_k$, and the symplectic product becomes $[\B{a}, \B{b}]=\sum_i [\B{a}_i, \B{b}_i]$. The phase factors are as before $\omega=e^{2\pi i/d}$ and $\tau=-e^{\pi i/d}$. Displacement operators acting on $\mathcal{H}_{d}^{\otimes n}$ may be defined as the tensor product of displacements on each subsystem~\cite{gross, Veitch_2012}, and so the multiqudit Weyl-Heisenberg group is therefore 
\begin{equation}
    \mathcal{W}_d^n\equiv\{\omega^sD_{\textbf{a}_1}\otimes\dots\otimes D_{\textbf{a}_n}\}\,,\quad\quad s\in\mathbb{Z},\,\,\textbf{a}_i\in\mathbb{Z}_d^{2n}.
\end{equation}

\subsection{Stabilizer subspaces and the Clifford group}
\label{stab_subspaces}

Having introduced the Weyl-Heisenberg displacement operators, we may define stabilizer codes, and in particular the stabilizer states, along with the Clifford group which preserves them. Let $d$ be the local dimension, and $n$ the number of qudits, equipped with a set of WH operators $\{D_{\B{a}}\}$. A \emph{stabilizer code} is a subspace of $(\mathbb{C}^d)^{\otimes n}$ which corresponds to the common $+1$ eigenspace of an abelian subgroup of the WH group called the stabilizer group of the code. If the eigenspace is in fact 1-dimensional, the corresponding pure state is called a \emph{stabilizer state}.

To better understand this construction, consider a set of symplectic indices $\mathcal{S}$. If $\forall \B{a}, \B{b} \in \mathcal{S}, [\B{a}, \B{b}]=0$, then $D_\B{a}D_\B{b}=\tau^{[\B{a}, \B{b}]}D_{\B{a}+\B{b}}=D_\B{a+b}=D_\B{b}D_\B{a}$, that is, the corresponding Weyl-Heisenberg operators commute. If we also demand  that the set is closed under addition and contains the zero element, then $\mathcal{S}$ forms a subgroup of $\mathbb{Z}_d^{2n}$ called \emph{totally isotropic}, and the corresponding WH operators form an abelian subgroup of the WH group. A simple example is the set $\mathcal{S} = \Big\{(p, 0) \ | \ p\in \mathbb{Z}_{d}, p=0 \text{ mod } d_S\Big\}$,
where we've taken $n=1$, and supposed that $d_S$ divides $d$. Then $|\mathcal{S}|=d^n/d_S$, and for any two elements $\B{a}=(p,0)$ and $\B{b}=(p^\prime, 0)$, the symplectic form  $[\B{a}, \B{b}]=p (0) - (0)p^\prime=0$ clearly vanishes: moreover, the set is closed under addition, and contains 0. 

Alongside a totally isotropic set $\mathcal{S}$, let also consider a function $f(\B{a}): \mathcal{S}\rightarrow \mathbb{Z}_d$ which is a group homomorphism, that is, which satisfies $f(\B{a}+\B{b})=f(\B{a})+f(\B{b}) \text{ mod } d$. A stabilizer codespace is completely determined by an isotropic set $\mathcal{S}$ and a group homomorphism $f(\B{a})$ \cite{Gross2021}. We may construct the projector onto the codespace by summing over the group elements
\begin{align}
\Pi_{\mathcal{S},f} = \frac{1}{|S|}\sum_{\B{a}\in \mathcal{S}} \omega^{f(\B{a})}D_{\B{a}}.
\end{align}
We require $\Pi_{\mathcal{S},f}=\Pi_{\mathcal{S},f}^2$ so that
\begin{align}
 \Pi_{\mathcal{S},f}^2 &= \left[\frac{1}{|S|}\sum_{\B{a}\in \mathcal{S}} \omega^{f(\B{a})}D_{\B{a}}\right]\left[\frac{1}{|S|}\sum_{\B{b}\in \mathcal{S}} \omega^{f(\B{b})}D_{\B{b}}\right]=\frac{1}{|\mathcal{S}|^2}\sum_{\B{a},\B{b}\in \mathcal{S}}\omega^{f(\B{a})+f(\B{b})}\tau^{[\B{a}, \B{b}]}D_{\B{a}+\B{b}},
\end{align}
but  $\forall \B{a}, \B{b} \in \mathcal{S}, [\B{a}, \B{b}]=0$, so that letting $\B{c}=\B{a}+\B{b} \in \mathcal{S}$ and using the group homomorphism property of $f(\B{a})$,
\begin{align}
	 \Pi_{\mathcal{S},f}^2 &= \frac{1}{|\mathcal{S}|^2}\sum_{\B{a},\B{c}\in \mathcal{S}}\omega^{f(\B{a})+f(\B{c}-\B{a})}D_{\B{c}}=\frac{1}{|\mathcal{S}|}\sum_{\B{c}\in \mathcal{S}}\omega^{f(\B{c})}D_\B{c}=\Pi_{\mathcal{S},f},
\end{align}
as desired. Similarly, one may check that $\Pi_{\mathcal{S}, f}=\Pi_{\mathcal{S}, f}^\dagger$, and the corresponding codespace must have dimension $d_S=d^n/|\mathcal{S}|$ \cite{Gross2021}. 

Finally, when $|\mathcal{S}|=d^n$, $\Pi_{\mathcal{S},f}$ will be a rank-1 projector onto a so-called \emph{stabilizer state}. The \emph{Clifford group} $\mathcal{C}_d^n$ is defined as the normalizer of the WH group, and consequently preserves the set of stabilizer states. It consists of unitaries $U$ such that for any $D_\textbf{a}\in\mathcal{W}_d^n$,
\begin{equation}
    U D_\textbf{a}U^\dagger =e^{i\gamma}D_{\textbf{a}'},
\end{equation}
for $\textbf{a}'\in\mathbb{Z}_d^{2n}$ and $\gamma\in\mathbb{R}$.
A straightforward calculation shows that such unitaries indeed form a group.

\subsection{Stabilizer entropy}

We now introduce the particular measure of nonstabilizerness which we will employ throughout: the \emph{stabilizer entropy}. Since the WH operators form an orthogonal basis for $\mathcal{L}(\mathcal{H}_d^{\otimes n})$, we can expand a state $\psi$ in terms of them,
\begin{equation}
    \psi=\frac{1}{d^n}\sum_{\textbf{a}\in\mathbb{Z}_d^{2n}}\trace{(D_\textbf{a}^\dagger\psi)}\,D_\textbf{a}.
\end{equation}
We will call $\chi_\textbf{a}(\psi)\equiv\frac{1}{d^n}\trace{(D_\textbf{a}^\dagger\psi)}$ the characteristic function of the state $\psi$~\cite{gross,Veitch_2012}. Since the $D_\textbf{a}$'s are not Hermitian operators (except for $d=2$), the characteristic function will generally be complex-valued. However, by taking the absolute value squared of $\chi_\textbf{a}(\psi)$ and appropriately rescaling it,  we find that 
\begin{equation}
    P_\textbf{a}(\psi)\equiv\frac{1}{d^n}\absval{\trace{(D_\textbf{a}^\dagger\psi)}}^2
\end{equation}
forms a probability vector in the case that $\psi =|\psi\rangle\langle \psi|$ is pure.  We can then define the \textit{stabilizer R\'enyi entropy}~\cite{Leone_2022,wang_stabilizer_2023} of order $\alpha$  for a pure state as
\begin{equation}
    M_\alpha(\psi)\equiv\frac{1}{1-\alpha}\log{\sum_{\textbf{a}\in\mathbb{Z}_d^{2n}}P_\textbf{a}(\psi)^\alpha}-\log{d^n},
\end{equation}
where the $\log d^n$ offset ensures that this quantity is lower-bounded by zero. This definition may be extended to mixed states, but we will not need to do so for the sequel.

Stabilizer R\'enyi entropies have the following properties which make them good measures of nonstabilizerness from a resource-theory perspective~\cite{wang_stabilizer_2023, QRT}:
\begin{itemize}
    \item[(i)] Faithfulness: \begin{equation}
        M_\alpha(\psi)=0\quad \text{iff}\quad \psi\,\, \text{is a stabilizer state};
    \end{equation}
    \item[(ii)] Invariance under Clifford unitaries:  \begin{equation}
        M_\alpha(C\psi C^\dagger)=M_\alpha(\psi) \quad\quad \forall C\in\mathcal{C}_d^n;
    \end{equation}
    \item[(iii)] Additivity under tensor product:
    \begin{equation}
        M_\alpha(\psi\otimes\phi)=M_\alpha(\psi)+M_\alpha(\phi);
    \end{equation}
    \item[(iv)] Existence of a nontrivial upper bound:
    \begin{equation}\label{upb}
        M_\alpha(\psi)\leq\frac{1}{1-\alpha}\log{\frac{1+(d^n-1)(d^n+1)^{1-\alpha}}{d^n}},
    \end{equation}
    with equality if and only if $\psi$ belongs to a WH-covariant \emph{symmetric informationally complete set} \cite{cuffaro2024quantumstatesmaximalmagic}: this is just one of many extremality properties satisfied by SICs.
\end{itemize}
It is generally more tractable to deal with linearized stabilizer entropies based on the $\alpha$-Tsallis entropy \cite{Tsallis1988},
\begin{equation}
    M_{\lin}^{(\alpha)}(\psi)=\frac{1}{\alpha-1}\left(1-(d^{n})^{\alpha-1}\sum_{\textbf{a}}P_\textbf{a}(\psi)^\alpha\right).
\end{equation}
$M_{\lin}^{(\alpha)}$ is again faithful and invariant under Clifford unitaries for $\alpha\geq2$ \cite{PhysRevA.110.L040403}, and its linearity makes it particularly useful in practical computations, especially when dealing with averages. Indeed, notice that if we define \be Q_\alpha=\frac{1}{(d^{n})^\alpha}\sum_\textbf{a}(D_{\textbf{a}}\otimes D_{\textbf{a}}^\dagger)^{\otimes\alpha}, \ee we have compactly
\begin{equation}
    M_{\lin}^{(\alpha)}(\psi)=\frac{1}{\alpha-1}\biggr(1-(d^n)^{\alpha-1}\trace\left(Q_\alpha\,\psi^{\otimes2\alpha}\right)\biggr).
\end{equation}
Since $Q=Q^\dagger$, it is in fact an observable, which reassures us that $M_{\lin}^{(\alpha)}(\psi)$ is a real number. If $\mathcal{H}=(\mathbb{C}^2)^{\otimes n}$, the Weyl-Heisenberg group coincides with the standard Pauli group and $Q$ is in fact a projector: in general, however, $Q^2\neq Q$. In what follows, for simplicity, we will specialize to the linear stabilizer entropy with $\alpha=2$, so that
\begin{align}
    M(\psi)\equiv M_{\lin}^{(2)}(\psi)&=1-\frac{1}{d^{n}}\sum_{\B{a}}|\trace(D_{\B{a}}^\dagger \psi)|^4=1-d^n\trace\left(Q\,\psi^{\otimes4}\right),
\end{align}
for $Q\equiv Q_2$, where we have dropped the subscripts and superscripts to ease notation.

The stabilizer entropies can be directly related to the difficulty of both classical and quantum simulation. By the Gottesman-Knill theorem, a quantum process which consists of stabilizer operations on stabilizer states can be efficiently classically simulated. Working in the context of $n$-qubit systems, \cite{PhysRevLett.118.090501} provided a classical simulation algorithm for estimating expectation values of Pauli operators after stabilizer operations have been applied to an \emph{arbitrary state}, relating the number of samples required to a measure called the \emph{robustness of magic}, defined as
\begin{align}
\mathcal{R}(\psi) = \min_x\left\{\sum_i |x_i|, \psi=\sum_i x_i \sigma_i\right\}.
\end{align}
Here $\{\sigma_i\}$ is the set of stabilizer states, which form an overcomplete operator frame in which a state $\psi$ can be expanded and the $x_i$ are real such that $\sum_ix_i=1$ but not necessarily non-negative. Since the frame is overcomplete, there are different choices for expansion coefficients $x_i$: the robustness of magic is the least 1-norm which is achievable over choices of such vectors of coefficients, and can be viewed as a measure of negativity in this representation. The authors in  \cite{PhysRevLett.118.090501} establish that to estimate the expectation value within $\delta$ with probability greater than $1-\epsilon$, one must use $N$ samples where $N$ is
\begin{align}
N = \frac{2}{\delta^2}\left(\sum_i |x_i|\right)^2\ln\left(\frac{2}{\epsilon}\right) && \epsilon = 2 \exp\left(-\frac{N \delta^2}{2\left(\sum_i |x|_i\right)^2}\right)	.
\end{align}
In other words, the number of samples required for a Monte Carlo simulation scales as $O(\mathcal{R}(\psi)^2)$ \cite{Heinrich2019}. Comparable results may be obtained for the simulation of more general quantum systems, beyond qubits: the essence of the idea is that the cost of quantum Monte Carlo methods scales with the amount of negativity in the expansion coefficients $x_i$ \cite{PhysRevLett.115.070501}.

Crucially, the stabilizer entropies provide lower bounds on the robustness of magic \cite{Leone_2022, PhysRevLett.118.090501}. In particular, for $\alpha \ge \frac{1}{2}$, the R\'enyi stabilizer entropies (for pure states) satisfy
\begin{align}
M_\alpha(\psi) \leq 2 \log(\mathcal{R}(\psi)),
\end{align}
and for the 2-Tsallis stabilizer entropy in particular, we have (Appendix \ref{RobustnessBounds}),
\begin{align}
\mathcal{R}(\psi)^2\ge \frac{1}{1-M(\psi)}.	
\end{align}
Thus stabilizer entropies provide lower bounds on the robustness of magic, and consequently on the number of samples required in a classical simulation algorithm.

At the same time, the stabilizer entropy can be related to the difficulty of \emph{quantum} simulation. \cite{PhysRevA.110.L040403} has shown that that the $n$-qubit R\'enyi stabilizer entropies for $\alpha \ge 2$ are pure-state monotones, that is, $M_\alpha(\Phi(\psi))\leq M_\alpha(\psi)$ for any stabilizer operation $\Phi$, which includes not just Clifford unitaries, but also the operations of appending stabilizer ancillas, measuring in the computational basis, partial tracing and dephasing, as well as conditional operations on measurement outcomes or classical randomness.  The same applies to the linear stabilizer entropies based on the Tsallis entropy, and it is expected that this result holds as well for $n$-qudit stablizer entropies. For any pure-state monotone satisfying additivity under the tensor product, the rate of conversion from states $A$ to $B$ can be upper bounded by \cite{Beverland_2020}
\begin{align}
r_{A\rightarrow B} \leq \frac{M(A)}{M(B)}.
\end{align}
In other words, the maximum number of copies of $B$ that can be produced per copy of $A$ using only stabilizer operations is upper bounded by the ratio between their stabilizer entropies. In particular, if $M(A) \leq M(B)$, one cannot convert $A$ into $B$ using only stabilizer operations. Indeed, magic state distillation protocols provide a means to convert a collection of imperfect magic states into a small number of standard highly magic states: the stabilizer entropy ratio diagnoses the number of input states necessary to distill a particular magic state. A promising model for fault-tolerant quantum computation, the magic state model, proposes confining the computation to Clifford operations, which take stabilizer states to stabilizer states, along with the injection of some number of standard magic states \cite{Veitch2014}. Under the assumption that Clifford operations are trivial to perform, the stabilizer entropy helps to bound the resource cost in terms of the number of standard magic states necessary to perform an arbitrary quantum computation.

\section{Stabilizer Entropy gap} 
\label{ase_gap}
We now motivate and define the central quantity which we will investigate in this work: the \emph{average stabilizer entropy gap}, or more concisely, the \emph{average magic gap},  and provide explicit formulas for its evaluation. Suppose we have assigned a $d_S$ dimensional Hilbert space $\mathcal{H}_{d_S}$ to a quantum system, and we wish to simulate it by embedding it into a subspace of a larger Hilbert space $\mathcal{H}_{d_B}$ of dimension $d_B$. (We shall often refer to $\mathcal{H}_{d_S}$ as the ``small'' space, and $\mathcal{H}_{d_B}$ as the ``big'' space.) A choice naturally arises in how we should calculate the stabilizer entropy of a state: on the one hand, we could calculate the \emph{intrinsic} stabilizer entropy with respect to some choice of WH operators on $\mathcal{H}_{d_S}$; on the other hand, we could calculate the \emph{extrinsic} stabilizer entropy with respect to some choice of WH operators in $\mathcal{H}_{d_B}$.  In light of this, given a pure state $\psi \in \mathcal{L}(\mathcal{H}_{d_S})$, a linear embedding map $\mathcal{E}:\mathcal{H}_{d_S}\rightarrow \mathcal{H}_{d_B}$, and choices of WH operators $\{D^S_{\textbf{a}}\}$ and $\{D^B_{\textbf{a}}\}$ on the small and big spaces respectively, we define the \emph{stabilizer entropy gap of a state} as
\begin{align}
    \Delta M(\psi) &\equiv M_B(\mathcal{E}(\psi))-M_S(\psi)\\
    &=\Big(1 - d_B\trace\left(Q_B \mathcal{E}(\psi)^{\otimes 4}\right)\Big)-\Big(1 - d_S\trace\left(Q_S \psi^{\otimes 4}\right)\Big)\\
    &=d_S\trace\left(Q_S \psi^{\otimes 4}\right)-d_B\trace\left(Q_B \mathcal{E}(\psi)^{\otimes 4}\right),
\end{align}
where 
\begin{align}
  Q_S = \frac{1}{d_S^2}\sum_\textbf{a}(D^S_{\textbf{a}}\otimes D_{\textbf{a}}^{S\dagger})^{\otimes2} &&  Q_B = \frac{1}{d_B^2}\sum_\textbf{a}(D^B_{\textbf{a}}\otimes D_{\textbf{a}}^{B\dagger})^{\otimes2}.
\end{align}
The \emph{average stabilizer entropy gap} (ASE gap) $ \Delta M(\mathcal{E}) $ \cite{PhysRevD.109.126008,doi:10.1142/S0219887825500033} is this quantity averaged over all states in $\mathcal{H}_{d_S}$. The ASE gap therefore quantifies the resource disparity between a direct simulation of a $d_S$ dimensional quantum system and an indirect simulation of the same system living in particular $d_S$ dimensional subspace of a larger Hilbert space. Turning the Haar average over pure states in a Haar average over the unitary group $U(d_S)$, we have
\begin{align}
    \Delta M(\mathcal{E}) &\equiv \mathbb{E}_{\psi}\big[\Delta M(\psi)\big]\\
    &=\mathbb{E}_U[M_B(\mathcal{E}(U\psi U^\dagger))] -\mathbb{E}_U[M_S(U\psi U^\dagger)] \\
    &= d_S\trace\left(Q_S \mathcal{A}^S_4\right)-d_B\trace\left(Q_B \mathcal{E}^{\otimes 4}(\mathcal{A}^S_4)\right),
\end{align}
where
\begin{align}
   \mathcal{A}_4^S &=\mathbb{E}_U\big[(U\psi U^\dagger)^{\otimes 4}\big]= \binom{d_S+3}{4}^{-1}\Pi^S_{\text{sym}^4},
\end{align}
and 
\begin{align}
   \Pi^S_{\text{sym}^4} = \frac{1}{24}\sum_{\sigma\in S_4} T_{\sigma} &&T_{\sigma}=\sum_{i,j,k,l=0}^{d_S-1}\ket{\sigma(i,j,k,l)}\bra{ijkl}.
\end{align}
This last result derives from Schur-Weyl duality \cite{roberts_chaos_2017, Mele_2024}: the Haar average is proportional to the projector onto the permutation symmetric subspace of four tensor factors. 

On the one hand, through explicit calculation of $\trace(Q\Pi_{\text{sym}^4})$ in any dimension, \cref{App_Average} establishes that the average intrinsic stabilizer entropy is
\begin{align}
\mathbb{E}_U[M_S(U\psi U^\dagger)] &= 1 - d_S \binom{d_S+3}{4}^{-1}\trace\left( Q_S \Pi_{\text{sym}^4}^S\right)\\
&=\begin{dcases}
    1-\frac3{d_S+2}, &  d_S \text{ odd (multiqudit)} \\
    1-\frac{3(d_S+2)}{(d_S+1)(d_S+3)},   & d_S \text{ even (multiqudit)} \\
    1- \frac{4}{d_S+3}, & d_S=2^m \text{ (multiqubit)}.\\
\end{dcases}
\end{align}
The first two cases apply regardless of whether one employs the single qudit or multiqubit WH groups, $d_S$ being the total dimension. Only in the case that $d_S=2^m$, does the average stabilizer entropy depend on the choice of WH group: treated as a single qudit, one ought to use the second case; treated as $m$ qubits, one ought to use the third case. Indeed, this already shows that treating a system as $m$ qubits leads to lower ASE than treating a system as a single $2^m$-dimensional qudit, so that employing $m$ qubit stabilizer operations in one's computation is on average more efficient.

On the other hand, in order to evaluate $\mathbb{E}_U[M_B(\mathcal{E}(U\psi U^\dagger))]$, we make use of Lemma 1 from \cite{PhysRevD.109.126008} which assures us that, up to proportionality, we may trade the average over the subspace for an average over the larger Hilbert space,
\begin{align}
\mathbb{E}_U[M_B(\mathcal{E}(U\psi U^\dagger))]
&=  1 - d_B \trace\left( Q_S \mathcal{E}^{\otimes 4}(\mathcal{A}_4^S)\right)\\
&=1- d_B\binom{d_B+3}{4}\binom{d_S+3}{4}^{-1} \trace\left(Q_B\Pi_{\mathcal{E}}^{\otimes 4}\mathcal{A}^B_4 \right)\\
&=1-d_B \binom{d_S+3}{4}^{-1} \trace\left(Q_B\Pi_{\mathcal{E}}^{\otimes 4}\Pi_{\text{sym}^4}^B \right),
\end{align}
where $\mathcal{A}_4^B =\binom{d_B+3}{4}^{-1}\Pi^B_{\text{sym}^4}$ and $\Pi_{\mathcal{E}} \in \mathcal{L}(\mathcal{H}_{d_B})$ is the projector onto the desired $d_S$ dimensional subspace. This reformulation allows us to calculate the average extrinsic stabilizer entropy by working solely with operators defined on $\mathcal{H}_{d_B}$. \cref{App_Average} derives an explicit expression for $\trace(Q \Pi_{\mathcal{E}}^{\otimes 4} \Pi_{\text{sym}^4})$ in terms of the characteristic function $\chi_\B{a} := \frac{1}{d_B}\trace(D_\B{a}^\dagger \Pi_\mathcal{E})$ of the projector $\Pi_{\mathcal{E}}$,
\begin{align}
\label{bigguy}
\trace\left(Q\Pi_{\mathcal{E}}^{\otimes 4}\Pi_{\text{sym}^4} \right) &= \frac{1}{24}\Bigg\{3 d_B^2\sum_{\B{a}} |\chi_{\B{a}}|^4 + 6 d_B\sum_{\B{a}, \B{b}}  |\chi_{\B{a}}|^2  |\chi_{\B{b}}|^2 \omega^{[\B{a}, \B{b}]}+6 d_B\sum_{\B{a}, \B{b}}\chi_{\B{a}}^{2}\chi_{\B{b}}\chi_{\B{b}+2\B{a}}^*\\
&+8 \sum_{\B{a}, \B{b}, \B{c}}\chi_{\B{a}}\chi_{\B{b}}\chi_{\B{c}}\chi^*_{\B{a}+\B{b}+\B{c}}\tau^{[\B{a}, \B{b}]-[\B{a}, \B{c}]-[\B{c}, \B{b}]}+ \sum_{\B{a}}\left|\sum_{\B{b}}\chi_{\B{b}}\chi^*_{\B{b}-2\B{a}}\right|^2\Bigg\},\nonumber
\end{align}
where $\omega=e^{2\pi i/d_B}$, $\tau=-e^{i\pi/d_B}$  and $\left[\cdot,\cdot\right]$ denotes the symplectic product. In order to evaluate $\chi_\B{a}$ for an arbitrary symplectic index $\B{a}$, note that from Eq.\! \ref{evenodd}, the characteristic function satisfies for a $d$-dimensional Hilbert space
\begin{align}
\chi_{\B{a}}=\chi_{\B{x}+d\B{y}}=\begin{cases}
        \chi_{\B{x}} &\text{$d$ odd}\\ (-1)^{[\B{x},\B{y}]}\chi_{\B{x}}\quad &\text{$d$ even,}
        \end{cases}	
\end{align}
for $\B{x}=\B{a} \text{ mod } d$ and $\B{y}=\frac{1}{d}(\B{a}-\B{x})$. Putting this all together, the final expression of the ASE gap reads
\begin{align}
    \Delta M(\mathcal{E}) 
    &=-d_B \binom{d_S+3}{4}^{-1}  \frac{1}{24}\Bigg\{3 d_B^2\sum_{\B{a}} |\chi_{\B{a}}|^4 + 6 d_B\sum_{\B{a}, \B{b}}  |\chi_{\B{a}}|^2  |\chi_{\B{b}}|^2 \omega^{[\B{a}, \B{b}]}+6 d_B\sum_{\B{a}, \B{b}}\chi_{\B{a}}^{2}\chi_{\B{b}}\chi_{\B{b}+2\B{a}}^*\nonumber \\
    &+8 \sum_{\B{a}, \B{b}, \B{c}}\chi_{\B{a}}\chi_{\B{b}}\chi_{\B{c}}\chi^*_{\B{a}+\B{b}+\B{c}}\tau^{[\B{a}, \B{b}]-[\B{a}, \B{c}]-[\B{c}, \B{b}]} + \sum_{\B{a}}\left|\sum_{\B{b}}\chi_{\B{b}}\chi^*_{\B{b}-2\B{a}}\right|^2\Bigg\}\\&+\begin{dcases}
    \frac3{d_S+2}, &  d_S \text{ odd (multiqudit)} \\
    \frac{3(d_S+2)}{(d_S+1)(d_S+3)},   & d_S \text{ even (multiqudit)} \\
     \frac{4}{d_S+3}, & d_S=2^m \text{ (multiqubit)}.\\
    \end{dcases}\nonumber,
\end{align}

We now derive a specialized expression for the characteristic function $\chi_{\textbf{a}}$ of a projector onto an irreducible subspace of a representation of a group $G$, with particular emphasis on subspaces invariant under the group action. Consider a vector space $V$ on which a finite group $G$ acts via a representation $\rho: G \to GL(V)$. We can decompose $V$ into a direct sum of irreducible subspaces,
\be V = \bigoplus_i V_i \otimes \mathbb{C}^{\mu_i}, \ee
where $V_i$ are the irreducible subspaces and $\mu_i$ are their multiplicities. The projector onto the $i$-th sector (i.e., the subspace that transforms according to the irreducible representation $V_i$) is given by the formula
\be \Pi_i = \frac{\dim(V_i)}{|G|} \sum_{g \in G} \xi_i^*(g) \rho(g), \ee 
where $\xi_i(g)=\trace( \rho(g))$ is the character of the irreducible representation $V_i$.
Therefore the characteristic function of the projector onto the $i$-th sector reads
\begin{align}
    \chi_{\textbf{a}}(\Pi_i) & = \frac1d \trace (D_{\textbf{a}}^\dagger \Pi_i) \nonumber \\
    & =\frac1d \frac{\dim(V_i)}{|G|}\sum_{g \in G} \xi_i^*(g)\trace (D_{\textbf{a}}^\dagger\rho(g)) \nonumber \\
    & = \frac{\dim(V_i)}{|G|}\sum_{g \in G} \xi_i^*(g) \chi_{\textbf{a}}(\rho(g)).
\end{align}
The invariant subspace $V_0$ is the subspace of $V$ whose vectors remain unchanged under the action of all elements of the group $G$. This subspace corresponds to the sector in the decomposition that transforms according to the trivial representation. For this representation, the character is $\chi_0(g) = 1$ for all $g \in G$, and its dimension is $\dim(V_{0}) = 1$. Substituting these values into the general formula, we obtain the following expression for the projector onto the invariant subspace and its characteristic function,
\begin{align}
   \Pi_0 = \frac{1}{|G|} \sum_{g \in G} \rho(g)  &&  \chi_{\textbf{a}}(\Pi_0) 
    & = \frac{1}{|G|}\sum_{g \in G}\chi_{\textbf{a}}(\rho(g)),
\end{align} 
which shows that the characteristic function of the projector $\chi_{\textbf{a}}(\Pi_0)$ is a linear combination of the characteristic functions of (the representations of) all the group elements.

\section{ASE Gap for random subspaces of fixed dimension}
\label{overall}
The behavior of the average stabilizer entropy gap is highly dependent on the choice of subspace. It is natural to ask, however: what kind of gap should one expect for a subspace chosen purely at random? In pursuing this question, we are led to compute the ASE \emph{averaged over all subspaces of a fixed dimension} $d_S$, evaluating in particular
\begin{align}
    \mathbb{E}_{U, \mathcal{E}}[M_B(\mathcal{E}(U\psi U^\dagger))] = 1-d_B \binom{d_S+3}{4}^{-1}\mathbb{E}_\mathcal{E}\left[ \trace\left(Q_B \Pi_{\mathcal{E}}^{\otimes 4}\Pi_{\text{sym}^4}^B \right)\right],\label{average_sub}
\end{align}
where now the expectation value is over not only all states in a given subspace but also all embedding maps $\mathcal{E}$. We arrive at the following theorem:
\begin{theorem}[ASE gap for random subspaces]
    The ASE gap for a subspace of fixed dimension $d_S$ is on average the difference between the ASE of the larger space $\mathcal{H}_{d_B}$ and the ASE of the small space $\mathcal{H}_{d_S}$,
    \begin{equation}
        \mathbb{E}_\mathcal{E}\big[\triangle M(\mathcal{E})\big] = \mathbb{E}_U[M_B(U\psi U^\dagger)]  - \mathbb{E}_U[M_S(U\psi U^\dagger)].
    \end{equation}
\end{theorem}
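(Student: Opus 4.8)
The plan is to compute the subspace average $\mathbb{E}_\mathcal{E}\big[\trace(Q_B \Pi_\mathcal{E}^{\otimes 4}\Pi_{\text{sym}^4}^B)\big]$ by exploiting the fact that a Haar-random $d_S$-dimensional projector can be written as $\Pi_\mathcal{E} = V \Pi_0 V^\dagger$ where $\Pi_0$ is a fixed reference projector of rank $d_S$ and $V$ is Haar-distributed on $U(d_B)$. Then $\mathbb{E}_\mathcal{E}[\Pi_\mathcal{E}^{\otimes 4}] = \mathbb{E}_V[(V\Pi_0 V^\dagger)^{\otimes 4}] = \mathbb{E}_V[V^{\otimes 4}\Pi_0^{\otimes 4}V^{\dagger\otimes 4}]$, which by Schur--Weyl duality is a linear combination of the permutation operators $T_\sigma$, $\sigma\in S_4$, on $\mathcal{H}_{d_B}^{\otimes 4}$. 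Crucially, since $\Pi_0^{\otimes 4}$ commutes with all $T_\sigma$ and in particular with $\Pi_{\text{sym}^4}^B$, the symmetric-subspace projection collapses the combination: $\Pi_{\text{sym}^4}^B \,\mathbb{E}_\mathcal{E}[\Pi_\mathcal{E}^{\otimes 4}]\, \Pi_{\text{sym}^4}^B$ must be proportional to $\Pi_{\text{sym}^4}^B$ itself, because the symmetric subspace carries an irreducible representation of $U(d_B)$ and the averaged operator is $U(d_B)$-invariant (by Schur's lemma, or equivalently because the only element of the $S_4$-commutant that survives restriction to $\text{sym}^4$ is the identity). The proportionality constant is fixed by taking a trace: $\trace(\Pi_{\text{sym}^4}^B \Pi_\mathcal{E}^{\otimes 4}) = \trace(\Pi_0^{\otimes 4}\Pi_{\text{sym}^4}^B)$, which I would identify as a ratio of binomial coefficients, namely $\binom{d_S+3}{4}/\binom{d_B+3}{4}$ (this is just $\dim\text{sym}^4(\mathbb{C}^{d_S})/\dim\text{sym}^4(\mathbb{C}^{d_B})$, obtainable by fixing $\Pi_0$ to project onto a coordinate subspace).

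**Next I would** substitute this back into Eq.~\eqref{average_sub}. One gets
\begin{align}
\mathbb{E}_\mathcal{E}\big[\trace(Q_B \Pi_\mathcal{E}^{\otimes 4}\Pi_{\text{sym}^4}^B)\big] = \frac{\binom{d_S+3}{4}}{\binom{d_B+3}{4}}\,\trace\!\big(Q_B \Pi_{\text{sym}^4}^B\big),
\end{align}
so that
\begin{align}
\mathbb{E}_{U,\mathcal{E}}[M_B(\mathcal{E}(U\psi U^\dagger))] = 1 - d_B\binom{d_B+3}{4}^{-1}\trace\!\big(Q_B\Pi_{\text{sym}^4}^B\big),
\end{align}
which is precisely the expression for the average \emph{intrinsic} stabilizer entropy $\mathbb{E}_U[M_B(U\psi U^\dagger)]$ of the big space, derived earlier in \cref{ase_gap}. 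Subtracting $\mathbb{E}_U[M_S(U\psi U^\dagger)]$ from both sides and recalling the definition $\triangle M(\mathcal{E}) = \mathbb{E}_U[M_B(\mathcal{E}(U\psi U^\dagger))] - \mathbb{E}_U[M_S(U\psi U^\dagger)]$ then gives the claim. Intuitively the result is natural: averaging a state uniformly over a uniformly random subspace of $\mathcal{H}_{d_B}$ is the same as averaging uniformly over all of $\mathcal{H}_{d_B}$.

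**The main obstacle** I anticipate is making the Schur-type argument fully rigorous in the even-qudit case, where $Q_B$ is not a projector and the relevant WH structure has the $2d$-periodicity subtlety noted after Eq.~\eqref{evenodd}; however, this does not actually affect the argument, since the collapse $\Pi_{\text{sym}^4}^B\mathbb{E}_\mathcal{E}[\Pi_\mathcal{E}^{\otimes 4}]\Pi_{\text{sym}^4}^B \propto \Pi_{\text{sym}^4}^B$ depends only on the $U(d_B)$-representation theory of the symmetric subspace and not on any property of $Q_B$. A secondary point requiring care is justifying the parametrization $\Pi_\mathcal{E}=V\Pi_0 V^\dagger$ with $V$ Haar on $U(d_B)$ as the correct notion of "uniformly random subspace'' — this is standard (the Grassmannian $\mathrm{Gr}(d_S,d_B)$ inherits its unique $U(d_B)$-invariant measure as the pushforward of Haar measure) but should be stated. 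Finally I would double-check the normalization constant $\binom{d_S+3}{4}/\binom{d_B+3}{4}$ by the explicit coordinate-subspace computation, since an off-by-normalization error here would break the cancellation that makes the theorem clean.
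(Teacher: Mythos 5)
Your proof is correct and arrives at the same intermediate identity as the paper, namely $\mathbb{E}_\mathcal{E}\big[\trace(Q_B\Pi_\mathcal{E}^{\otimes 4}\Pi_{\text{sym}^4}^B)\big] = \binom{d_S+3}{4}\binom{d_B+3}{4}^{-1}\trace(Q_B\Pi_{\text{sym}^4}^B)$, but you evaluate the unitary average by a genuinely different mechanism. The paper applies the Weingarten calculus directly to $\mathbb{E}_V\big[V^{\otimes 4}\Pi_{\text{sym}^4}^{S\subset B}V^{\dagger\otimes 4}\big]$, expanding it as $\sum_{\sigma,\tau}\wg(\sigma,\tau)\trace(\Pi_{\text{sym}^4}^{S\subset B}T_\sigma)T_\tau$ and then invoking the summed-coefficient identity $\sum_\sigma\wg(\sigma,\tau)=\tfrac{1}{4!}\binom{d_B+3}{4}^{-1}$, proved separately via $S_4$ character theory. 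You instead note that the twirled operator lies in the commutant of $\{W^{\otimes 4}\}$, that every permutation operator restricts to the identity on $\mathrm{Sym}^4(\mathbb{C}^{d_B})$, which is $U(d_B)$-irreducible, and hence that the compression to the symmetric subspace is a scalar fixed by a single trace. Your route is cleaner in that it bypasses the Weingarten coefficients entirely; the paper's is more mechanical but would also apply to averages not supported on the symmetric subspace. Both arguments rest on the same two structural facts: the parametrization of a uniformly random subspace as $V\Pi_0 V^\dagger$ with $V$ Haar-distributed (the invariant measure on the Grassmannian, as you note), and the commutation of $\Pi_0^{\otimes 4}$ with $\Pi_{\text{sym}^4}^B$, which justifies replacing $\mathbb{E}_\mathcal{E}[\Pi_\mathcal{E}^{\otimes 4}]\Pi_{\text{sym}^4}^B$ by its two-sided compression before applying Schur's lemma. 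One small wording slip to fix: $\trace(\Pi_0^{\otimes 4}\Pi_{\text{sym}^4}^B)$ equals $\binom{d_S+3}{4}$, not the ratio $\binom{d_S+3}{4}/\binom{d_B+3}{4}$; the ratio is the proportionality constant obtained only after dividing by $\trace(\Pi_{\text{sym}^4}^B)=\binom{d_B+3}{4}$. Your displayed formula and the final cancellation are nonetheless correct, and your remark that the argument is insensitive to $Q_B$ not being a projector in the even-qudit case is well taken.
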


\begin{proof}
To prove the result we just need to evaluate \cref{average_sub}. Let $\Pi=\sum_{i=1}^{d_S}|i\rangle\langle i|$ where $\{|i\rangle\}_{i=1}^{d_B}$ is the computational basis on $\mathcal{H}_{d_B}$. We may always write $\Pi_\mathcal{E}=V\Pi V^\dagger$ for some choice of unitary $V\in U(d_B)$, and thus translate our average over subspaces into an average over the unitary group. We observe that $V^{\otimes 4}\Pi^{\otimes 4}V^{\dagger \otimes 4}\Pi_{\text{sym}^4}^B=V^{\otimes 4}(\Pi^{\otimes 4}\Pi_{\text{sym}^4}^B)V^{\dagger \otimes 4}$ since by construction $\Pi_{\text{sym}^4}^B$ commutes with all fourth tensor powers of unitaries. Moreover, it is straightforward to show that $\Pi^{\otimes 4}\Pi_{\text{sym}^4}^B=\mathcal{E}^{\otimes 4}(\Pi_{\text{sym}^4}^S)=\Pi_{\text{sym}^4}^{S\subset B}$ is just the symmetric projector supported on the subspace. Using the Weingarten calculus (see \cref{App_Weingarten} for a short review), we may evaluate the average itself as
\ba 
\exv_{V}\left[ V^{\ot 4} \Pi_{\text{sym}^4}^{S\subset B} V^{\dag \ot 4}\right]  = & \sum_{\sigma,\tau\in S_4} \wg (\sigma,\tau)\trace\left(\Pi_{\text{sym}^4}^{S\subset B} T_{\sigma}\right) T_{\tau},
\ea 
so that
\ba 
\mathbb{E}_\mathcal{E}\left[\trace\left(Q_B\Pi_{\mathcal{E}}^{\otimes 4}\Pi_{\text{sym}^4}^B \right)\right] &=&
\sum_{\sigma,\tau\in S_4}\wg(\sigma ,\tau)\trace\left(\Pi_{\text{sym}^4}^{S\subset B} T_{\sigma}\right) \trace\left(Q_B T_{\tau}\right) .
\ea 
Using the fact that $\Pi_{\text{sym}^4}^{S\subset B}=\frac{1}{4!}\sum_{\pi\in S_4} T_{\pi}^{S\subset B}$, and
\begin{equation}
    T_{\alpha}^{S\subset B}T_{\beta}^{B}=T_{\alpha}^{S\subset B}T_{\beta}^{S\subset B}=T_{\alpha \beta}^{S \subset B},
\end{equation} 
along with the following property of the Weingarten coefficients proved in \cref{App_Weingarten},
\be 
\sum_{\sigma} \text{Wg}(\sigma,\tau) = \frac{1}{4!}\binom{d_B+3}{4}^{-1} ,
\ee
we arrive at
\ba 
\mathbb{E}_\mathcal{E}\left[\trace\left(Q_B\Pi_{\mathcal{E}}^{\otimes 4}\Pi_{\text{sym}^4}^B \right)\right] &=&\sum_{\sigma,\tau\in S_4}\wg(\sigma,\tau)\trace\left(Q_B T_{\tau}\right)\frac{1}{4!}\sum_{\pi\in S_4} \trace\left(T_{\pi\sigma}^{S\subset B}\right) \\
&=& \binom{d_S+3}{4}\sum_{\sigma,\tau}\wg(\sigma,\tau)\trace\left(Q_B T_{\tau}\right)\\
&=&\binom{d_S+3}{4} \binom{d_B+3}{4}^{-1}\trace\left(Q_B \Pi_{\text{sym}^4}^B\right). \label{fin_avg}
\ea
We conclude that
\begin{align}
    \mathbb{E}_{U, \mathcal{E}}[M_B(\mathcal{E}(U\psi U^\dagger))]  &= 1-d_B \binom{d_B+3}{4}^{-1}\trace\left(Q_B \Pi_{\text{sym}^4}^B\right),
\end{align}
which is, in fact, just the stabilizer entropy averaged over the big space. From the definition of the ASE gap, we then have our claim.
\end{proof}

\begin{figure}[t]
\centering
\includegraphics[width=0.5\textwidth]{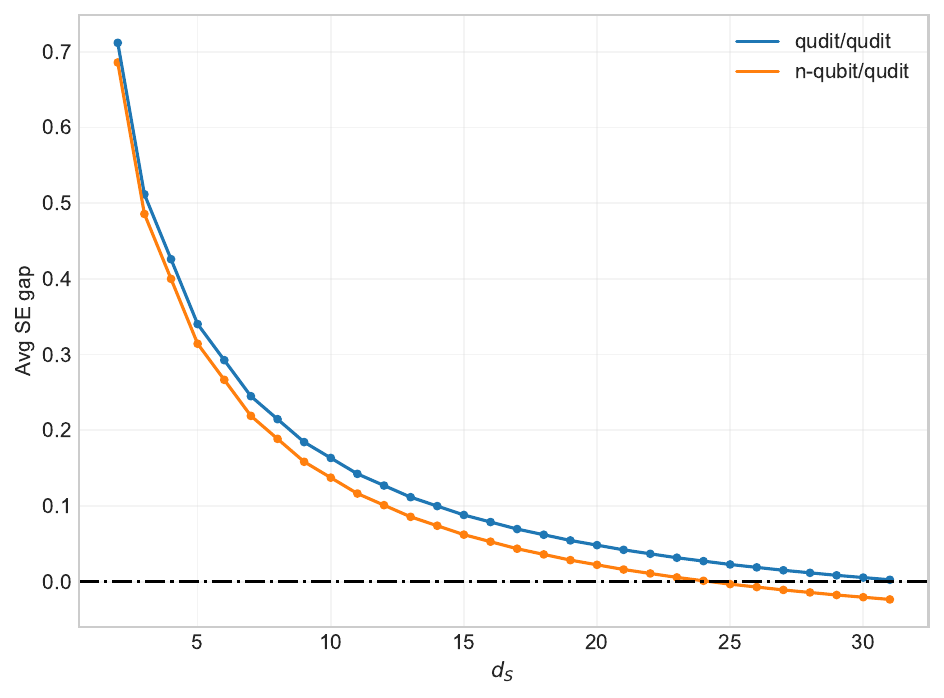}\includegraphics[width=0.5\textwidth]{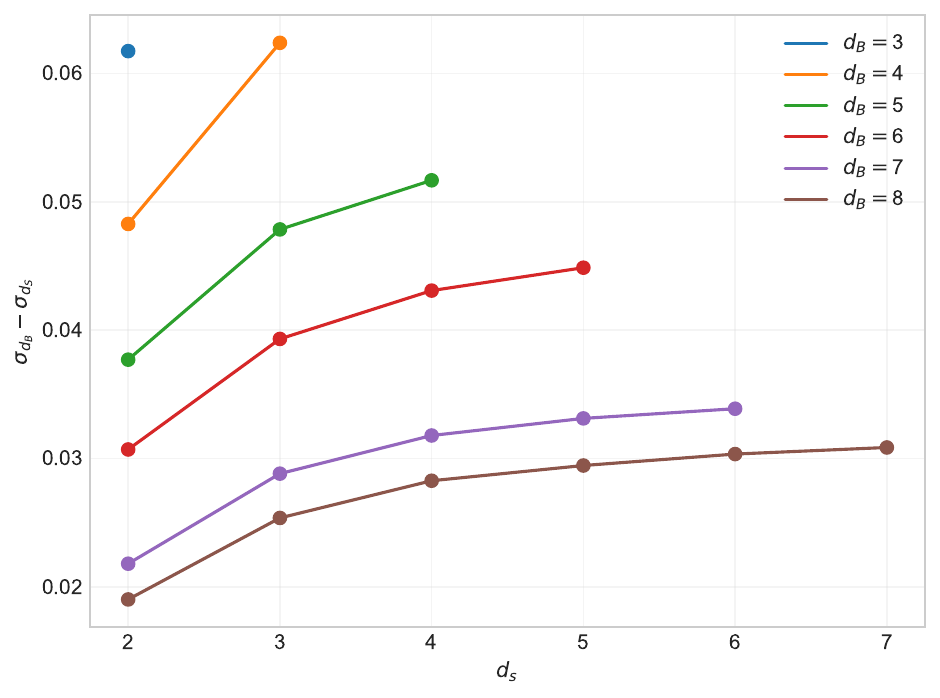}
\caption{On the left: the ASE gap, averaged over subspaces, plotted for $d_B=32$ and for $d_S$ from $2$ to $d_B-1$. In blue, we treat the $\mathcal{H}_{d_B}$ as a single qudit, and in orange we treat it as $n=5$ qubits. On the right: for different qudit dimensions $d_B$, the difference between the standard deviation $\sigma_{d_B}$ of the ASE of $\mathcal{H}_{d_B}$ and the standard deviation $\sigma_{d_S}$ of the ASE averaged over $d_S$-dimensional subspaces. The former was calculated by averaging over 750 random states in $\mathcal{H}_{d_B}$, the latter by averaging over 750 random subspaces.}
\label{fig:double_average}
\end{figure}

We have thus shown that the ASE of a subspace, averaged over all choices of subspace with fixed dimension $d_S$, amounts to the ASE of the entire $d_B$ dimensional space. In other words, for a subspace selected at random, the ASE gap will be on average
\begin{align}
    \mathbb{E}_\mathcal{E}\big[\triangle M(\mathcal{E})\big] &= \mathbb{E}_U[M_B(U\psi U^\dagger)]  - \mathbb{E}_U[M_S(U\psi U^\dagger)] .
\end{align}
The left hand side of Figure \ref{fig:double_average} plots the average gap for $d_B=32$ and for all possible small dimensions $d_S$. We consider $\mathcal{H}_{d_B}$ both as a single qudit and as 5 qubits. Strikingly, when the larger space is treated as $n$-qubits, the ASE gap will generically go \emph{negative} as $d_S$ approaches $d_B$, implying  fewer resources are required to simulate the embedded system than are required to simulate the system treated on its own. Finally, we observe that the variance of the ASE of $\mathcal{H}_{d_B}$ will differ if one averages over states as compared to subspaces. On the right hand side of Figure \ref{fig:double_average}, we plot the difference between the standard deviation $\sigma_{d_B}$ of the ASE averaged over all states in $\mathcal{H}_{d_B}$ and the standard deviation $\sigma_{d_S}$ of the ASE averaged over subspaces of fixed dimension $d_S$. The  $\sigma_{d_S}$ is lower, and goes to 0 as $d_S\rightarrow d_B$.

\section{Subspaces with extremal nonstabilizerness}
\label{extreme}

\begin{figure}[t]
\centering
\includegraphics[width=1\textwidth]{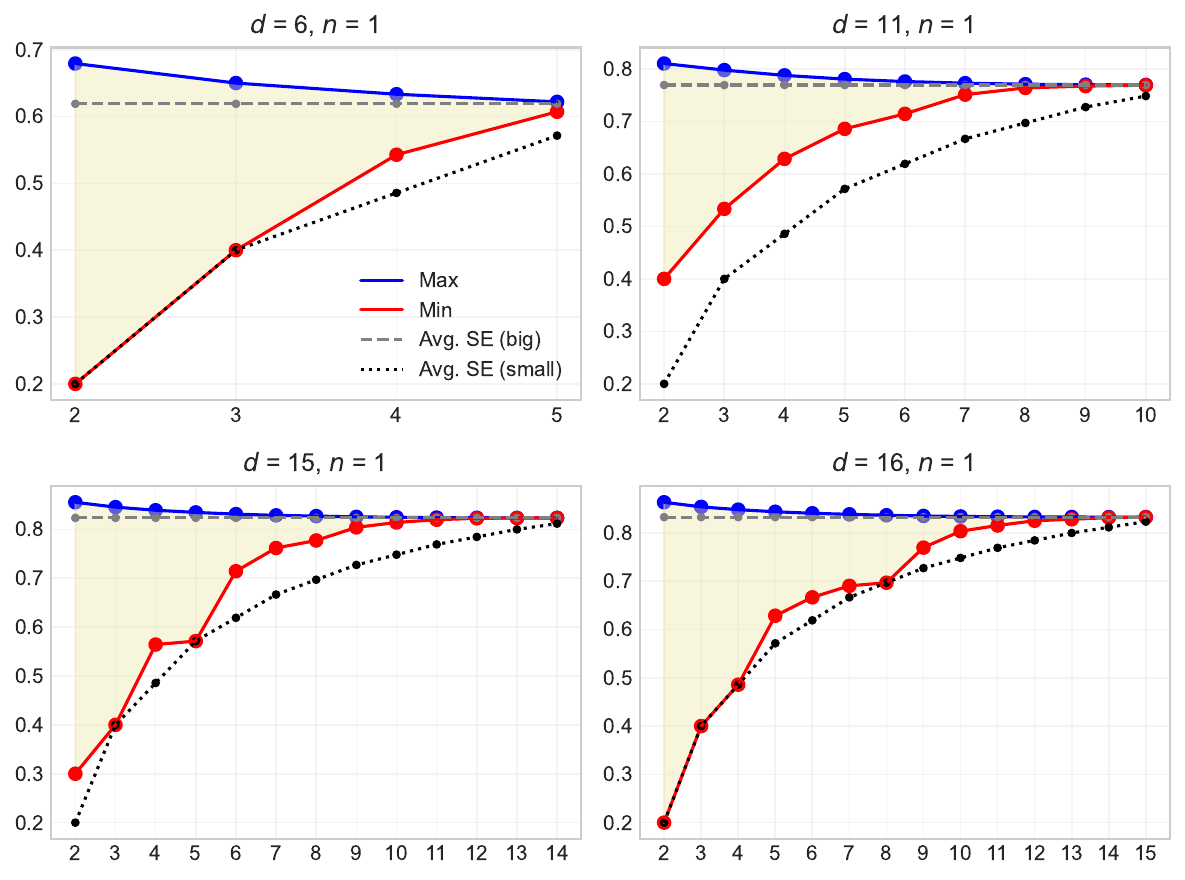}
\caption{For a variety of big Hilbert space dimensions $d_B$, we plot for each small dimension $d_S$ from $2$ to $d_B-1$ the minimal average SE achievable over all choices of subspace as well as the maximum. The dashed horizontal line depicts the average SE of the big space, while the dotted line depicts the average SE of the small space. When the dotted line intersects the red, zero magic gap is achieved.}
\label{fig:qudit_qudit_select}
\end{figure}

\subsection{Numerical results}

If one is free to choose a subspace in which to encode a quantum system, which subspace should one choose ? Fixing a set of WH operators $\{D_{\B{b}}\}$ on the ``big'' Hilbert space $\mathcal{H}_{d_B}=\mathcal{H}_d^{\otimes n}$ as well as a dimension $d_S$ of the desired subspace, one potential answer is to choose a subspace which achieves the least average stabilizer entropy. Then on average, the preparation of any state of the encoded subsystem will require the least magic state injections in the quantum simulation case, and the least samples in the classical case.

\begin{figure}[t]
\centering
\includegraphics[width=1\textwidth]{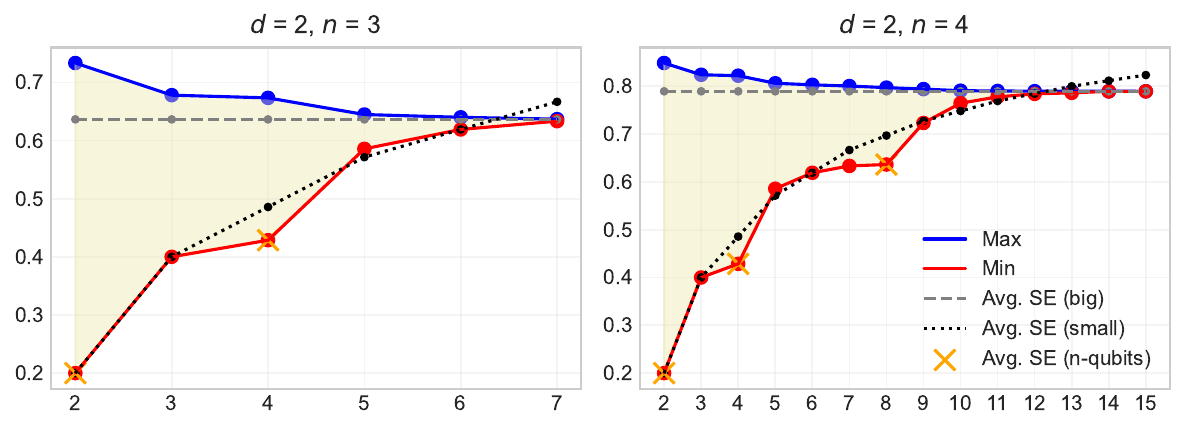}\\
\includegraphics[width=0.495\textwidth]{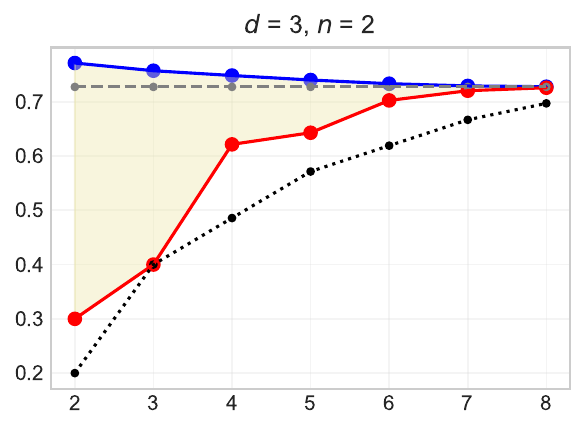}
\includegraphics[width=0.495\textwidth]{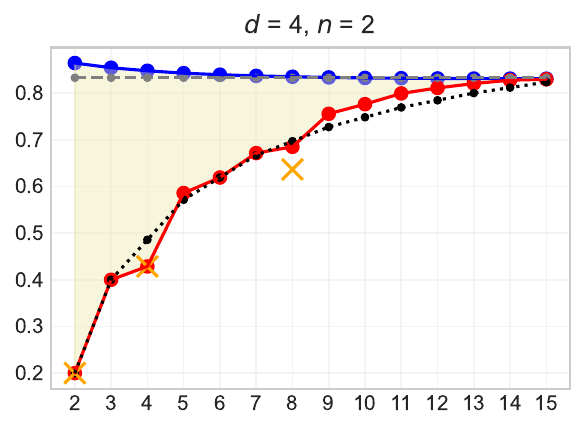}
\caption{The same in several multiqudit cases where the larger Hilbert space is $\mathcal{H}_d^{\otimes n}$. The orange X's depict the ASE of the small space considered as $n$-qubits when the subspace dimension $d_S$ is a power of 2. }
\label{fig:multiqudits}
\end{figure}

From a theoretical point of view, it is of interest to consider extremizing the ASE in both directions, considering for a fixed WH structure on the larger Hilbert space and fixed dimension $d_S$ of a subspace, those subspaces which minimize and maximize the average stabilizer entropy. Figure \ref{fig:qudit_qudit_select} displays the results of numerically extremizing the ASE over choices of subspace for a variety of dimensions with respect to the single qudit WH group. (See \cref{qubit_extrema} for the results for all dimension $d_B=3$ to $16$.) The BFGS algorithm \cite{Luenberger2008} was used to perform the calculations, and each optimization was run $\ge 100$ times. For lower dimensions, the ASE in the cost function was calculated exactly using Eq.\! \ref{bigguy} for half the runs. The rest of the time the ASE was approximated using a Monte Carlo scheme.

It is clear from the depicted results that as $d_S$ approaches $d_B$, the extremal ASE values approach the average SE of the full Hilbert space: this is consistent with the results of Section \ref{overall}. It is also clear that it is possible to achieve zero average magic gap for certain combinations of $d_S$ and $d_B$, in particular when $d_S$ is a factor of $d_B$. But this is not the only case: evidently, when $d_B$ is a power of 2, a zero ASE subspace of dimension three is always achievable. The asymmetry of these graphs is also notable: the maximal achievable ASE is already close to the ASE of the larger Hilbert space.

The top row in Figure \ref{fig:multiqudits} shows the results of the same exercise where we consider the $n$-qubit WH group on the larger space for $n=3,4$. We have found that for $n=2$, as in the $d_B=4$ qudit case, zero ASE gap is achievable for $d_S=2,3$. In fact, in \cref{fig:sym_qubits}, we shall see an example of the latter. More interestingly, for $n=3$, zero ASE gap is achievable for $d_S=2,3,6$ and \emph{negative} SE gap for $d_S=4,(7)$, where the parentheses denote the fact that in this case the average over the small space is already larger than the average over the big space, as noted in \cref{overall}. In fact, we can see that the gap is negative for $d_S=4$ when the subspace is treated as a qudit, precisely because zero gap is attainable when the subspace is treated as two qubits. Similarly, for $n=4$, zero SE gap is apparently attainable for $d_S=2,3,6$, while negative SE gap is possible for $d_S=4,7,8,9,12, (13),(14),(15)$. The cases of $d_S=4,8$, in fact, correspond to zero magic gap when the subspaces are treated as two and three qubits respectively. The lower two plots in \cref{fig:multiqudits} depict the results for two $d=3$ systems and two $d=4$ systems. The former is comparable to the single qudit case: zero magic gap is achievable only for $d_S=3$ which divides 9. On the other hand, for two ququarts, zero magic gap is achievable for $d_S=2,3,6$, while negative magic gap is achievable for $d_S=4,8$. While $d_S=4$ corresponds to zero magic gap when the subspace is treated as two qubits, it is apparently not the case that when $d_S=8$, one can achieve the ASE of three qubits. We note that if one could, then the minimal ASE would not be a monotonically increasing function of dimension.

\subsection{\texorpdfstring{A sufficient condition for ASE gap $\leq 0 $ }{A sufficient condition for ASE gap <= 0}}\label{zerogapsec}

In the previous section, we concluded from our numerics that in the single qudit case, if $d_S$ is a factor of $d_B$, one may find subspaces with zero average stabilizer entropy gap, and that for big Hilbert spaces of even dimension equipped with a multiqudit WH group, negative ASE gap is also achievable. We now establish a sufficient condition for a subspace to have ASE gap $\leq 0$ in the case the subspace corresponds to a particular type of stabilizer code. 

Let $d_B=d^n$ where $d$ is the local dimension, and $n$ is the number of qudits, equipped with a set of WH operators $\{D_{\B{a}}\}$. In \cref{stab_subspaces}, we showed that a stabilizer codespace corresponds to an abelian subgroup of the WH group, which can be presented as a totally isotropic set $\mathcal{S}$ of symplectic indices which are closed under addition mod $d$, which have pairwise vanishing symplectic product, and which contain the 0 vector, as well as a choice of group homomorphism $f(\B{a})$. The projector onto the codespace is then given by
\begin{align}
\Pi_{\mathcal{E}_{\mathcal{S},f}} = \frac{1}{|\mathcal{S}|}\sum_{\B{a}\in \mathcal{S}} \omega^{f(\B{a})}D_{\B{a}}.
\end{align}
We recall that there exists a totally isotropic subgroup of size $|\mathcal{S}|$ iff $|\mathcal{S}|$ divides $d_B$, and the dimension of the codespace is $d_S=d_B/|\mathcal{S}|$. In \cref{proofzerogap}, we prove the following theorem:
\begin{theorem}[ASE gap for stabilizer codespaces]\label{zerogap}
    The average stabilizer gap of a stabilizer codespace with  isotropic set $\mathcal{S}$ and trivial group homomorphism $f(\B{a})=0\,\,\,\forall \B{a}\in \mathcal{S}$, with corresponding isometry $\mathcal{E}_\mathcal{S}$, is 
   \begin{align}
	\Delta M(\mathcal{E}_\mathcal{S})&= \frac{\alpha d_B - |A_\mathcal{S}|d_S}{d_B(d_S+1)(d_S+2)(d_S+3)} && \alpha = \begin{cases}1, & d_S \text{ odd (multiqudit)} \\
	4, & d_S \text{ even (multiqudit)}\\
    d_S^2, &d_S=2^m \text{ (multiqubit)}, \end{cases}
\end{align}
where $|A_\mathcal{S}|$ is the cardinality of the set
\begin{align}
A_\mathcal{S} = \Big\{ \B{a} \in \mathbb{Z}_d^{2n}\ | \ 2\B{a}\in \mathcal{S}\quad and\quad 	\B{a}\in S^\perp \Big\}.
\end{align}
with $S^\perp\equiv\{ \B{a}\in \mathbb{Z}_d^{2n} \,\,| \,\,\forall \B{b}\in \mathcal{S},\,[\B{a}, \B{b}]=0\}$
\end{theorem}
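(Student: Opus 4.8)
The plan is to evaluate the master formula \eqref{bigguy} for $\trace(Q\Pi_{\mathcal{E}}^{\otimes 4}\Pi_{\text{sym}^4})$ term by term, using the very special structure of the characteristic function of a stabilizer-code projector with trivial homomorphism. Since $\Pi_{\mathcal{E}_\mathcal{S}} = \frac{1}{|\mathcal{S}|}\sum_{\B{a}\in\mathcal{S}}D_{\B{a}}$, its characteristic function is $\chi_{\B{b}}(\Pi_{\mathcal{E}_\mathcal{S}}) = \frac{1}{d_B}\trace(D_{\B{b}}^\dagger \Pi_{\mathcal{E}_\mathcal{S}}) = \frac{1}{|\mathcal{S}|}\,\delta_{\B{b}\in\mathcal{S}}$ by the orthogonality relation \eqref{orth} (taking care with the mod-$2d$ subtlety in even dimension via Eq.~\ref{evenodd}). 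The key simplification is therefore that $\chi_{\B{b}}$ is (up to normalization $1/|\mathcal{S}| = d_S/d_B$) just the indicator function of the isotropic set $\mathcal{S}$: it is real, nonnegative, and supported exactly on $\mathcal{S}$. This collapses every sum in \eqref{bigguy} into a counting problem over tuples of elements of $\mathcal{S}$ subject to linear constraints.

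First I would handle the ``easy'' terms. The term $3d_B^2\sum_{\B{a}}|\chi_{\B{a}}|^4 = 3 d_B^2 |\mathcal{S}| (d_S/d_B)^4$, and the $6d_B\sum_{\B{a},\B{b}}|\chi_{\B{a}}|^2|\chi_{\B{b}}|^2 \omega^{[\B{a},\B{b}]}$ term simplifies because $\mathcal{S}$ is totally isotropic, so $[\B{a},\B{b}]=0$ for all $\B{a},\B{b}\in\mathcal{S}$ and the phase disappears, leaving $6 d_B |\mathcal{S}|^2 (d_S/d_B)^4$. The term $8\sum_{\B{a},\B{b},\B{c}}\chi_{\B{a}}\chi_{\B{b}}\chi_{\B{c}}\chi^*_{\B{a}+\B{b}+\B{c}}\tau^{[\B{a},\B{b}]-[\B{a},\B{c}]-[\B{c},\B{b}]}$: all three symplectic forms vanish since $\B{a},\B{b},\B{c}\in\mathcal{S}$, and $\B{a}+\B{b}+\B{c}\in\mathcal{S}$ automatically by closure, so this is just $8|\mathcal{S}|^3(d_S/d_B)^4$ — here one must be slightly careful that closure is mod $d$ and the index of $\chi^*$ is evaluated using the even/odd rule, but since $\mathcal{S}\subseteq\mathbb{Z}_d^{2n}$ and sums of its elements reduce mod $d$ back into $\mathcal{S}$, with the even-$d$ sign $(-1)^{[\B{x},\B{y}]}$ needing a separate check. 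I expect these signs to conspire to $+1$ because the relevant symplectic products are among isotropic vectors.

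The interesting terms are the fourth and fifth: $6d_B\sum_{\B{a},\B{b}}\chi_{\B{a}}^2\chi_{\B{b}}\chi_{\B{b}+2\B{a}}^*$ and $\sum_{\B{a}}\bigl|\sum_{\B{b}}\chi_{\B{b}}\chi^*_{\B{b}-2\B{a}}\bigr|^2$. Both involve the doubling map $\B{a}\mapsto 2\B{a}$, which is where $A_\mathcal{S}$ enters. For the fourth term, we need $\B{a}\in\mathcal{S}$, $\B{b}\in\mathcal{S}$, and $\B{b}+2\B{a}\in\mathcal{S}$ (mod $d$, up to even-$d$ signs); since $\mathcal{S}$ is a group, $\B{b}+2\B{a}\in\mathcal{S} \iff 2\B{a}\in\mathcal{S}$, which combined with $\B{a}\in\mathcal{S}\subseteq\mathcal{S}^\perp$ says $\B{a}\in A_\mathcal{S}$ — and then the $\B{b}$-sum is unconstrained over $\mathcal{S}$, giving $6d_B |A_\mathcal{S}| |\mathcal{S}| (d_S/d_B)^4$. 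For the fifth term, $\sum_{\B{b}}\chi_{\B{b}}\chi^*_{\B{b}-2\B{a}}$ is nonzero only when there exists $\B{b}\in\mathcal{S}$ with $\B{b}-2\B{a}\in\mathcal{S}$, i.e. $2\B{a}\in\mathcal{S}$; but now $\B{a}$ ranges over all of $\mathbb{Z}_d^{2n}$, not just $\mathcal{S}$, so we also need to track whether the $\B{a}\in\mathcal{S}^\perp$ constraint arises. I expect it does not directly from this term — rather the inner sum equals $|\mathcal{S}|(d_S/d_B)^2$ whenever $2\B{a}\in\mathcal{S}$, and the number of such $\B{a}\in\mathbb{Z}_d^{2n}$ is the size of the preimage $\{\B{a}: 2\B{a}\in\mathcal{S}\}$, which I'd relate to $|A_\mathcal{S}|$ using that $\mathcal{S}\subseteq\mathcal{S}^\perp$ forces the preimage to lie in $\mathcal{S}^\perp$ automatically (since $2\B{a}\in\mathcal{S}\subseteq\mathcal{S}^\perp$ and $\mathcal{S}^\perp$ is closed under the relevant operations — this needs a short lemma, possibly invoking that $d_S^2 | d_B$ or a parity argument). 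Summing all five contributions, multiplying by $\frac{1}{24}$ and then by $-d_B\binom{d_S+3}{4}^{-1}$, and adding back the intrinsic ASE $\mathbb{E}_U[M_S]$ from the three-case formula, the $(d_S/d_B)^4$ powers and binomial combine (using $\binom{d_S+3}{4}^{-1} = \frac{24}{d_S(d_S+1)(d_S+2)(d_S+3)}$ and $|\mathcal{S}| = d_B/d_S$) to produce the claimed $\frac{\alpha d_B - |A_\mathcal{S}| d_S}{d_B(d_S+1)(d_S+2)(d_S+3)}$, with $\alpha$ absorbing the dimension-dependent numerators $3$, $3(d_S+2)$, $4$ from the intrinsic formula after algebraic rearrangement.

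The main obstacle I anticipate is bookkeeping the even-$d$ phase factors $(-1)^{[\B{x},\B{y}]}$ in Eq.~\ref{evenodd} consistently across all five terms — in particular whether $\chi_{\B{b}+2\B{a}}$ and $\chi_{\B{b}-2\B{a}}$ pick up signs when $\B{b}\pm 2\B{a}$ wraps around mod $d$ — and correctly identifying why the constraint set in the fifth term is governed by $A_\mathcal{S}$ (with the $\mathcal{S}^\perp$ condition) rather than a larger preimage; this likely requires the observation that $2\B{a}\in\mathcal{S}$ together with $\mathcal{S}\subseteq\mathcal{S}^\perp$ and a counting identity forces $\B{a}\in\mathcal{S}^\perp$, or else that the extra $\B{a}$'s contribute terms that cancel. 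The three-way case split on $\alpha$ (odd multiqudit / even multiqudit / multiqubit) is purely a matter of substituting the correct intrinsic-ASE branch and the correct value of $\trace(QT_\sigma)$-type constants, so I'd treat that as routine once the main identity is in hand.
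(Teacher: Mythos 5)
Your overall strategy --- substituting the indicator-type characteristic function $\chi_{\B{b}}=\frac{1}{|\mathcal{S}|}\delta_{\B{b}\in\mathcal{S}}$ (with the even-$d$ sign) into the master formula and evaluating term by term --- is exactly the paper's route, and your treatment of the first, second, and triple-sum terms is correct. However, you have misassigned where $|A_\mathcal{S}|$ enters, and this breaks the final count. In the fourth term $6d_B\sum_{\B{a},\B{b}}\chi_{\B{a}}^{2}\chi_{\B{b}}\chi^*_{\B{b}+2\B{a}}$ the factor $\chi_{\B{a}}^{2}$ already forces $\B{a}\in\mathcal{S}$, whence $2\B{a}\in\mathcal{S}$ and $\B{b}+2\B{a}\in\mathcal{S}$ hold automatically by closure; the contributing $\B{a}$ therefore range over $\mathcal{S}$ itself, not over $A_\mathcal{S}$ (which is generically much larger --- e.g.\ in the qubit case $A_\mathcal{S}=\mathcal{S}^\perp$ with $|\mathcal{S}^\perp|=d_Bd_S\gg|\mathcal{S}|$), and the term evaluates to $6d_B/|\mathcal{S}|^{2}$, not to $6d_B|A_\mathcal{S}|\,|\mathcal{S}|\,(d_S/d_B)^4$. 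Since your proposed fifth-term count does not compensate for this, the assembled formula would not reproduce the claimed $\frac{\alpha d_B-|A_\mathcal{S}|d_S}{d_B(d_S+1)(d_S+2)(d_S+3)}$.

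The set $A_\mathcal{S}$ in fact appears only in the fifth term, and the $\mathcal{S}^\perp$ condition there does \emph{not} follow from a ``the preimage of $\mathcal{S}$ under doubling automatically lies in $\mathcal{S}^\perp$'' lemma: that claim is false for even $d$, since one can have $2\B{a}\in\mathcal{S}$ while $[\B{a},\B{b}]=d/2\neq 0$ for some $\B{b}\in\mathcal{S}$ (doubling kills the obstruction). The actual mechanism is precisely the even-$d$ phase you flagged as a bookkeeping worry: writing $\B{b}-2\B{a}=\B{x}+d\B{y}$ with $\B{b},\B{x}\in\mathcal{S}$, one finds $(-1)^{[\B{x},\B{y}]}=\omega^{[\B{a},\B{b}]}$, so the inner sum becomes (up to normalization) the character sum $\sum_{\B{b}\in\mathcal{S}}\omega^{[\B{a},\B{b}]}$, which equals $|\mathcal{S}|$ if $\B{a}\in\mathcal{S}^\perp$ and vanishes otherwise. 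This is exactly how the two defining conditions of $A_\mathcal{S}$ are produced, and it is the step your proposal is missing (your fallback hypothesis that ``the extra $\B{a}$'s contribute terms that cancel'' is the right intuition, but you need the explicit character-sum argument to realize it). With the fourth term corrected to $6d_B/|\mathcal{S}|^{2}$ and the fifth established as $|A_\mathcal{S}|/|\mathcal{S}|^{2}$, the remaining assembly --- multiplying by $-\tfrac{1}{24}d_B\binom{d_S+3}{4}^{-1}$ and adding back the case-split intrinsic ASE --- goes through as you describe.
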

We now show that $\Delta M(\mathcal{E}_\mathcal{S})$ is easily calculable in at least three general cases. First, suppose that the larger Hilbert space is treated as $n$-qubits. Since $d=2$, we have that $\forall \B{a}$, $ 2\B{a}=0$. Thus the first condition on $A_\mathcal{S}$ is satisfied trivially, and so $A_\mathcal{S}=\mathcal{S}^\perp$. Meanwhile, Theorem 25 of \cite{gross} tells us that in general
\begin{align}
    |S^\perp| = \frac{|\mathbb{Z}_d^{2n}|}{|S|}= \frac{d^{2n}}{d^n/d_S}=d^nd_S=d_B d_S.
\end{align}
Our \cref{zerogap} then implies that if the subspace is treated as $m$-qubits, $\Delta M(\mathcal{E}_\mathcal{S})=0$, and otherwise, the average magic gap is negative. In fact, we shall see an example of this in \cref{422}. 

On the other hand, suppose that the code space is treated as a qudit, or multiple qudits. First, we consider the case that $d_B$ is odd. Since $|\mathcal{S}|$ is an integer, $d_S$ must be odd as well. The first condition on the set $A_\mathcal{S}$ is that $
\forall \,\B{a} \,\,\exists \,\B{c}\in \mathcal{S}$ such that  $2\B{a} \,\equiv\, \B{c}$.
Since $d$ is odd, $2$ has a unique inverse: $\frac{1}{2} = \frac{d+1}{2} \in \mathbb{Z}_{d}$, so that $\B{a}=\frac{d+1}{2}\B{c}$ uniquely, from which we conclude that $\B{a}\in \mathcal{S}$. By the isotropy of $\mathcal{S}$, the second condition is automatically satisfied, and so $A_\mathcal{S}=\mathcal{S}$ itself. By \cref{zerogap}, $\Delta M(\mathcal{E}_\mathcal{S})=0$. 

Finally, suppose that a) $d_B$ is even, b) $d_S$ is even and c) $\mathcal{S}\subseteq 2\mathbb{Z}_d^{2n}$, that is, the elements of $\mathcal{S}$ are all even. Since $d$ is even, 2 has no unique inverse. In general, a linear congruence
\begin{align}
ka \equiv c \pmod d
\end{align}
has solutions if and only if $\text{gcd}(k,d) \ |\  c$ and then  $\text{gcd}(k,d)$ is the number of solutions. Here $k=2$, so that $\text{gcd}(2,d)=2$. Thus for the congruence to have a solution, each component of $\B{c}$ must be divisible by 2, and this is indeed the case since by assumption $S\subseteq 2\mathbb{Z}_d^{2n}$. There are two solutions for each component, namely those that satisfy the congruence
\begin{align}
a_i \equiv\frac{1}{2}c_i \pmod{d/2},
\end{align}
that is,
\begin{align}
a_i = 	\frac{1}{2}c_i \mod d && a_i=\frac{1}{2}c_i + \frac{d}{2} \mod d.
\end{align}
Therefore for a given $\B{c}\in \mathcal{S}$, there are $2^{2n}=4^n$ solutions each corresponding to a valid $\B{a}$. Moreover, since $2\B{a}\in \mathcal{S}$,
\begin{align}
\forall \B{c}\in \mathcal{S}: [2\B{a}, \B{c}]=[\B{a}, 2\B{c}]=0.	
\end{align}
But since $\mathcal{S}\subseteq 2\mathbb{Z}_d^{2n}$, \emph{any} element of $\mathcal{S}$ can be written as $2\B{c}$ for some $\B{c}\in \mathcal{S}$, and so $\B{a}\in \mathcal{S}^\perp$ and the second condition of the set $A_\mathcal{S}$ is satisfied. Thus $|A_\mathcal{S}|=4^n |\mathcal{S}|$, and
\begin{align}
	\Delta M(\mathcal{E}_\mathcal{S})&= \frac{4 - 4^n}{(d_S+1)(d_S+2)(d_S+3)}.
\end{align}
By \cref{zerogap}, if $n=1$, then $\Delta M(\mathcal{E}_\mathcal{S})=0$, and more remarkably, if $n>1$, the average stabilizer entropy gap is negative.

In closing, we emphasize that we have given sufficient conditions for $\Delta M(\mathcal{E})\leq 0$ in the special case that the subspace corresponds to a stabilizer code (with trivial group homomorphism). At the same time, our numerical analysis has shown that there exist zero ASE gap subspaces whose corresponding projectors are much more complicated linear combinations of WH operators, not corresponding to stabilizer codes. Moreover, as observed in Section \ref{extreme}, it appears that if the overall system is treated as a qudit and $d_B$ is a power of 2, then there exist zero ASE gap subspaces for $d_S=3$. Not only that, but when the overall system is treated as $n$-qubits, one can achieve zero SE gap when $d_S$ has a factor of 3, and similarly in the case of two ququarts. As we will see in Section \ref{spin}, when $d=2, n=2$, the symmetric subspace provides an example of a three dimensional zero ASE gap subspace. We hope to clarify these observations in future work.

\section{Support on the Complement}
\label{complement}
\begin{figure}[t]
\centering
\includegraphics[width=1\textwidth]{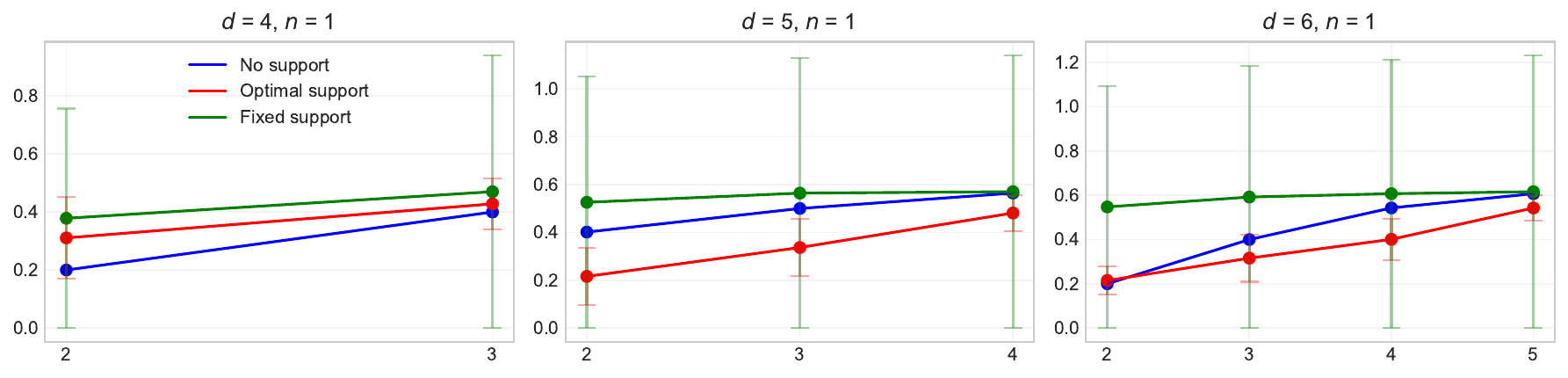}\\
\includegraphics[width=0.35\textwidth]{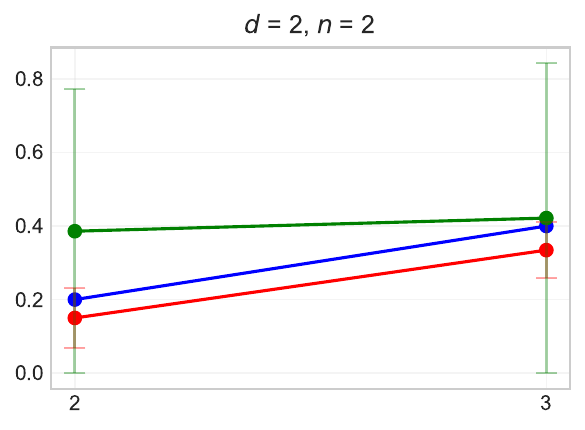}
\includegraphics[width=0.35\textwidth]{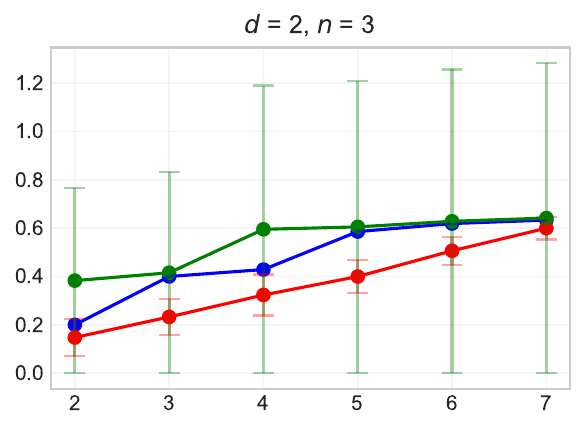}
\caption{For different values of $n$ and $d$, we plot in blue the minimal achievable ASE of a subspace of dimension $d_S$. In red, we plot the ASE obtained by adding in the optimal complement (in even superposition) for each of 750 random states. In green, we plot the ASE obtained by adding in the optimal fixed complement (in even superposition), again for each of 750 random states. In the latter two cases, the standard deviations are plotted as error bars.}
\label{fig:complements}
\end{figure}

So far we have considered embedding a $d_S$ dimensional state in a $d_B$ dimensional Hilbert space by means of an isometry $\mathcal{E}: \mathcal{H}_{d_S}\rightarrow \mathcal{H}_{d_B}$ so that the state $\mathcal{E}(\psi)$ is supported only on the chosen subspace. Since a priori, the WH operators $\{D_\B{a}\}$ on $\mathcal{H}_{d_B}$ do not respect the choice of subspace, it is possible that lower stabilizer entropy could be achieved by a state of the form
\begin{align}
\label{complementary_form}
    |\phi\rangle = \alpha \mathcal{E}(|\psi\rangle) + \beta |\kappa\rangle,
\end{align}
where $|\kappa\rangle$ is supported only on the complement of the subspace. Any state which can be written in the form of $|\phi\rangle$ will project to $|\psi\rangle$ on the subspace, and will be equivalent as far as operations confined to the subspace are concerned, but with respect to the WH operators on $\mathcal{H}_{d_B}$, we may have $M(|\phi\rangle) < M(\mathcal{E}(|\psi\rangle)$. It is then natural to consider minimizing the SE of $|\phi\rangle$ over the choice of support on the complement. In fact, in what follows, we will set $\alpha=\beta=\frac{1}{\sqrt{2}}$ in Eq.\! \ref{complementary_form}. Without such a restriction, it is possible that the support on the subspace itself may be driven to 0. Recall that any set of measurement operators $\{E_i\}$ must sum to the identity on the full Hilbert space, that is, $\sum_i E_i=I$. Therefore to perform a measurement on the subspace, one must include at least a projector onto the complement as a measurement operator, corresponding to ``no answer.'' If the support of a state on the subspace has very small norm, then with overwhelming probability such a measurement will yield ``no answer.'' In practical applications, it may be tolerable to have a small norm on the subspace, if the resource advantage in terms of magic states outweighs the increased number of samples necessary to obtain expectation values. For simplicity, we confine ourselves to this particular choice of norm. 

We may now contemplate the ASE of a subspace where for each state we consider the optimal support on the complement (which minimizes the SE). It is difficult, however, to interpret this in operational terms.  Suppose one wants to apply a unitary on the subspace. To keep the SE minimized, one will have to apply a unitary on the complement which takes the optimal support of the initial state to the optimal support of the final state. Unless the computation has a very special symmetry, one's unitaries must then all be state-dependent. On the other hand, one could ask a different question: given a choice of subspace, which is the \emph{single} optimal support on the complement, which minimizes the average SE? This is quite natural: if one can prepare this optimal support, then it will be left invariant under any operations supported solely on the subspace.

In \cref{fig:complements}, we consider for different $n$, $d$, and $d_S$, a subspace which achieves the minimal ASE. We compare the value of the ASE a) without support on the complement (blue), b) with optimal support for each state considered in the average (red), c) with the optimal fixed support in the complement (green): in the latter two cases, the standard deviation is plotted in the form of error bars. In each case, we see that adding a fixed support in the complement does worse than no support at all, and in fact leads to wide variation in the SE. On the other hand, generally speaking one may achieve lower ASE by adding in optimal support for each individual state, but as we have argued, this is unlikely to be relevant in practice. It is notable, however, that this does not appear true for $d=4$.  On the other hand, \cref{fig:complements_rel_change} shows the relative change in ASE averaged over 100 randomly chosen subspaces, using in each case the optimal fixed complement for that subspace, for different choices of $d, n,$ and $d_S$. From this it is clear, that for a typical subspace, it is ideal to have support in the complement, particularly when $d_S$ is much smaller than $d_B$.

\begin{figure}[t]
\centering
\includegraphics[width=1\textwidth]{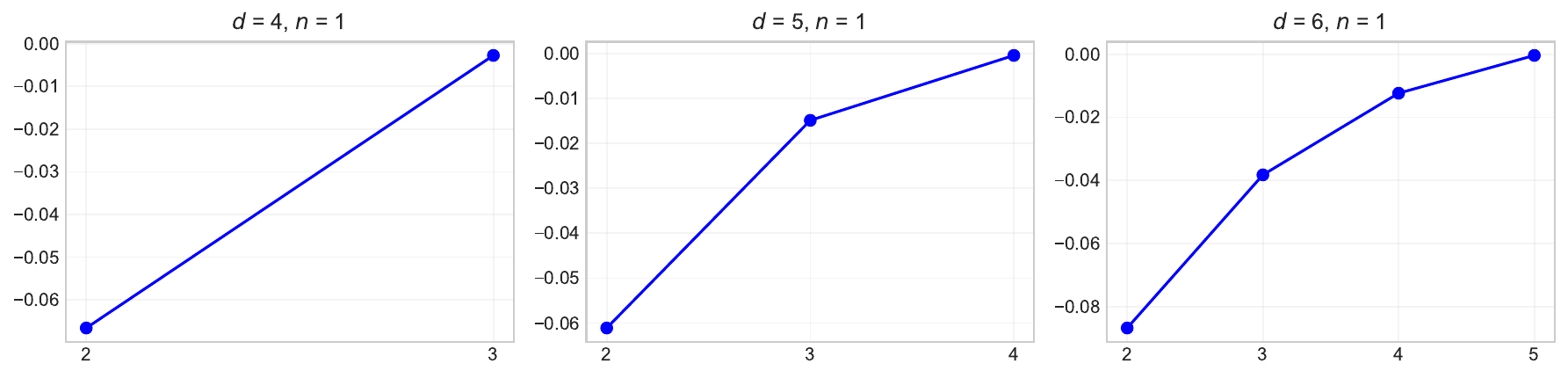}\\
\includegraphics[width=0.35\textwidth]{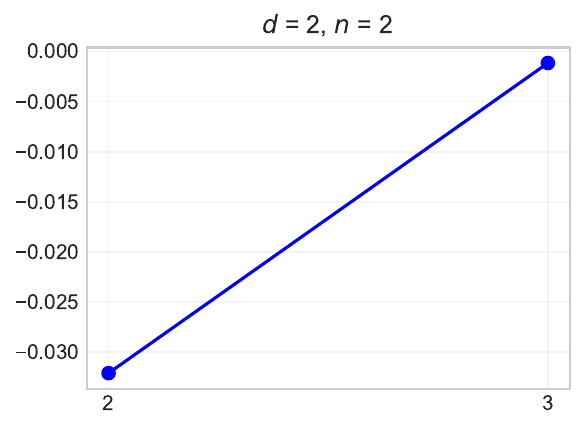}
\includegraphics[width=0.35\textwidth]{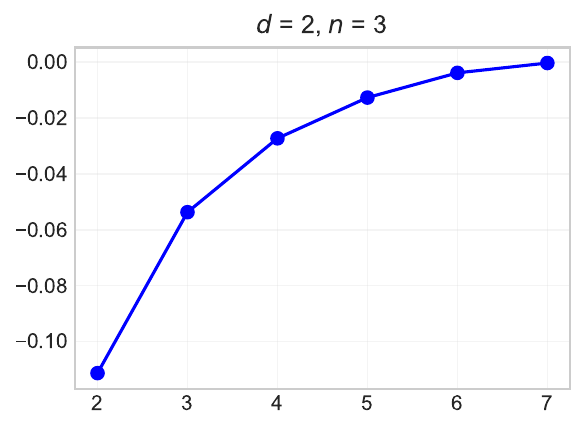}
\caption{For different values of $n$ and $d$, we plot the average relative change in ASE over 100 random choices of subspaces with dimension $d_S$, where the ASE is calculated by adding in the optimal fixed complement (in even superposition, for 750 random states).}
\label{fig:complements_rel_change}
\end{figure}

\section{Examples}
\label{examples}

\subsection{\texorpdfstring{The ground space of frustration-free Hamiltonian}{The ground space of frustration-free Hamiltonian}}

An instructive example of a physically relevant subspace is the ground state space (GSS) of a frustration-free Hamiltonian \cite{doi:10.1137/08072689X,PhysRevA.84.042338}. By construction, a frustration-free Hamiltonian is a sum of local Hamiltonians $H = \sum_i h_i$ that admits ground states which minimize each local term $h_i$ simultaneously. Thus the ground space can be expressed as the intersection of the local kernels $\mathcal{G} = \bigcap_i \ker(h_i)$. In many-body physics, the ground space plays a central role since it encodes the system’s long-range entanglement patterns and provides a natural setting for robust information encoding.

The authors of \cite{chen_ground-state_2012} have proposed a geometric framework in which to study such ground state spaces, and as an illustration construct explicitly the GSS for a two-local frustration-free Hamiltonian, that is, a Hamiltonian composed of interaction terms acting non-trivially on at most two particles (Example 6).  Considering a three-qubit system, the GSS is two-dimensional,
\be
S=\text{span}\{\frac{1}{\sqrt{3}}(\ket{001}+\ket{010}+\ket{100}), \ket{000}\}.
\ee
On the one hand, the intrinsic ASE for a two-dimensional space is $\frac{1}{5}$. On the other hand, considered as a two-dimensional subspace of $\mathcal{H}_{2}^{\otimes 3}$, the extrinsic ASE is $\frac{5}{9}$, giving an ASE gap of $\frac{16}{45}= 0.3\overline{5}$. Considered as a two-dimensional subspace of $\mathcal{H}_8$, however, the extrinsic ASE is $\frac{83}{135}=0.6\overline{148}$, giving an ASE gap of $\frac{56}{135}=0.4\overline{148}$.

\subsection{\texorpdfstring{Symmetric subspace of $n$-qubits}{Symmetric subspace of n-qubits}}

\label{spin}
\begin{figure}[t]
\label{fig:sym_qubits}
\centering
\includegraphics[width=0.5\textwidth]{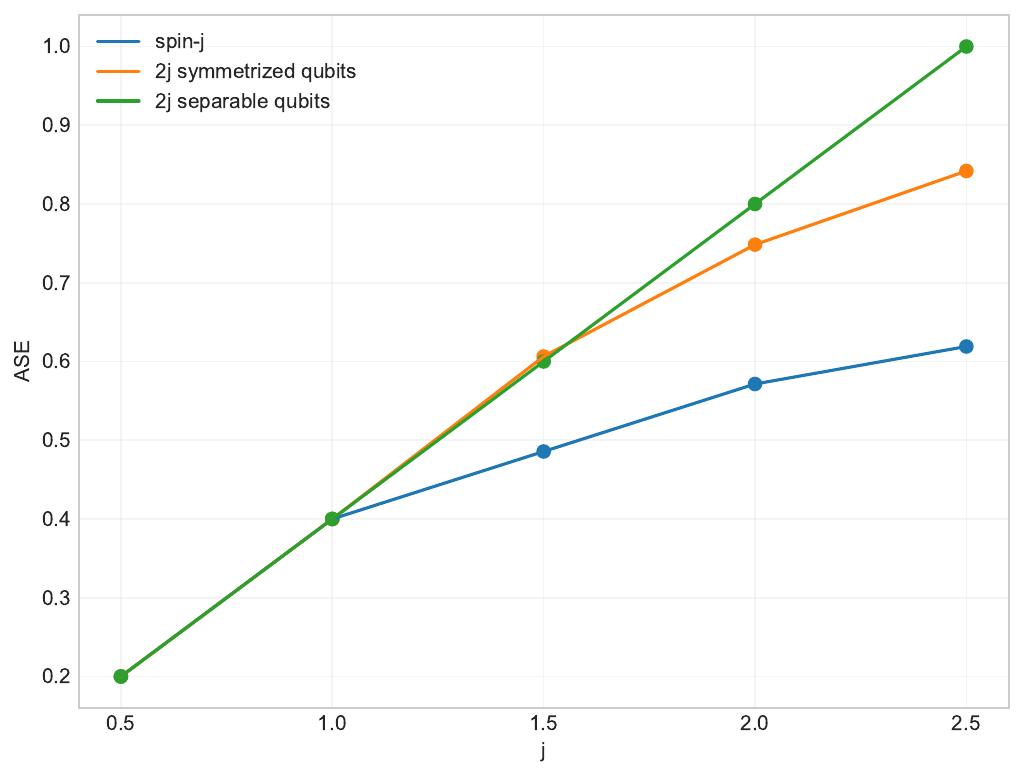}
\caption{The ASE of three representations of a spin-$j$ state: a) intrinsically, in a $2j+1$ dimensional Hilbert space, b) extrinsically, as $2j$ symmetrized qubits, and c) as $2j$ separable qubits using the Majorana stellar representation.}
\end{figure}

The dimension of the symmetric subspace of $\mathcal{H}_d^{\otimes n}$ is $\binom{d+n-1}{n}$. In particular, when $d=2$, the dimension of the symmetric subspace becomes $n+1$. Thus we may encode a spin-$j$ state, which is $2j+1$ dimensional, into the symmetric subspace of $n=2j$ qubits. For example, for $j=1$,
\begin{align*}
|1,1\rangle &\rightarrow |\!\uparrow \uparrow \rangle  \\
|1, 0\rangle &\rightarrow  \frac{1}{\sqrt{2}}\Big(|\!\uparrow \downarrow \rangle + |\!\downarrow \uparrow \rangle \Big)\\
|1, -1\rangle &\rightarrow |\!\downarrow \downarrow \rangle . 
\end{align*}
In general, the spin-$j$ basis vector $|j, m\rangle$ is taken to the $2j$-qubit state which is an even superposition of all states with $j+m$ qubits $\uparrow$ and $j-m$ qubits $\downarrow$. We may then compare the intrinsic ASE of the $d_S=2j+1$ states calculated with respect to the qudit WH group against the extrinsic ASE of the $d_B = 2^{2j}$ states calculated with respect to the $n$-qubit WH group. At the same time, we may also express a spin-$j$ state as a direct sum of $2j$ (separable) qubits. This construction makes use of the \emph{Majorana stellar representation} \cite{Bengtsson_Życzkowski_2017, Liu_2017}. Given a pure spin-$j$ state $|\psi\rangle$, we may form a polynomial of a single complex variable
\begin{align}
p(z) &= \sum_{m=-j}^{j} (-1)^{j-m} \sqrt{\binom{2j}{j-m}} \langle j, m|\psi\rangle z^{j+m},
\end{align}
and decompose it into its roots $\{\alpha_i\}$. If there are less than $2j$ roots, we add $\infty$ as a root enough times for $|\{\alpha_i\}|$ to be $2j$. We may then reverse stereographically project these roots from $\mathbb{C}\cup \infty$ to the Riemann sphere via
\begin{align}
\alpha \rightarrow \begin{cases}(0,0,-1), & \alpha=\infty\\
\left(2\Re[\alpha], 2\Im[\alpha], \frac{1-|\alpha|^2}{1+|\alpha|^2}\right), & \text{otherwise}\end{cases}.	
\end{align}
In this way, up to phase, each pure spin-$j$ state can be mapped bijectively to a constellation of $2j$ points (or stars) on the sphere. On the one hand, this is a natural generalization of the Bloch sphere representation of a qubit, in particular as $\text{SU}(2)$ rotations of the spin-$j$ state correspond to rotations of the full constellation; on the other hand, these special points can be related operationally to the zeros of the spin-coherent wavefunction of the state. Alternatively, one may map the roots to qubit states via 
\begin{align}
\alpha \rightarrow \begin{cases}(0,1)^T, & \alpha=\infty\\
\frac{1}{\sqrt{1+|\alpha|^2}}(1, \alpha)^T, & \text{otherwise}.\end{cases}.
\end{align}
If one takes the permutation symmetric tensor product of these $2j$ qubit states, one recovers up to phase precisely the permutation symmetric state originally constructed.

The results of an exact computation are shown in Fig. \ref{fig:sym_qubits}. Notably, for $j=1$, the ASE gap is zero in both cases: the average intrinsic stabilizer entropy is equal to the average extrinsic stabilizer entropy calculated with respect to both the symmetrized and separable qubits. For all other $j$, the ASE gap is positive. Notably, only for $j=\frac{3}{2}$ does the average SE of the symmetrized qubits exceed that of the separable qubits.

\subsection{Quantum polyhedra}

Given the tensor product of $n$ spin Hilbert spaces $\otimes_{i=1}^n\mathcal{H}_{j_i}$, we may consider the Clebsch-Gordan decomposition into a direct sum of spin sectors labeled by total spin $J$, each of which may appear with multiplicity $\mu_J$,
\begin{align}
\bigotimes_{i} \mathcal{H}_{j_i} =	\bigoplus_J\mathcal{H}_{J} \otimes \mathbb{C}^{\mu_J} .
\end{align}
For example, we may decompose four spin-$\frac{1}{2}$ spins into
\begin{align}
\mathcal{H}_{\frac{1}{2}}^{\otimes 4}	&= 2\mathcal{H}_0 \oplus 3 \mathcal{H}_1 \oplus \mathcal{H}_2,
\end{align}
where the coefficients signify the multiplicity. In this case, the spin-$0$ sector has multiplicity two, and so can be regarded as a logical qubit invariant under collective SU$(2)$ rotations of the four physical qubits. For example, in the framework of a background independent theory of quantum gravity \cite{rovelli2015covariant,Oriti_2016}, the restriction to the spin-$0$ subspace is precisely the implementation of the Gauss law constraint in the theory, and geometrically, it is a quantization of the Minkowski condition for the closure of a polyhedron. The four physical qubits represent edges in a spin-network or equivalently, facets of the polyhedron.

We consider quantum tetrahedra with spin-$\frac{1}{2}$ facets and spin-$1$ facets, as well as a quantum cube with six spin-$\frac{1}{2}$ facets. The extrinsic ASE for the spin-$\frac{1}{2}$ tetrahedron was already calculated exactly in \cite{PhysRevD.109.126008}, yielding $\frac{17}{45}$.  For the spin-$1$ tetrahedron, the Clebsch-Gordan decomposition gives
\begin{align}
    \mathcal{H}_1^{\otimes 4} = 3 \mathcal{H}_0 \oplus 6 \mathcal{H}_6 \oplus 6 \mathcal{H}_2 \oplus 3\mathcal{H}_3 \oplus \mathcal{H}_4,
\end{align}
so that $d_S=3$. For the spin-$\frac{1}{2}$ cube, we have
\begin{align}
    \mathcal{H}_{\frac{1}{2}}^{\otimes 6} = 5 \mathcal{H}_0 \oplus 9 \mathcal{H}_1 \oplus 5 \mathcal{H}_2 \oplus \mathcal{H}_3,
\end{align}
so that $d_S=5$. To approximate the ASE in the latter two cases, we consider the average of 20 Monte Carlo runs each with 1000 samples. The results are tabulated below, demonstrating in each case a positive ASE gap.
\begin{center}
\begin{tabular}{||c | c | c ||} 
 \hline
Polyhedron & Extrinsic ASE & Intrinsic SE \\
 \hline\hline4 spin-$\frac{1}{2}$ & $0.3\overline{7}$ &0.2\\
\hline
4 spin-$1$ & $0.85183 \pm 0.00119$ &  0.4\\
\hline
6 spin-$\frac{1}{2}$ & $0.75041 \pm 0.002285$ & 0.57142\\
\hline
\end{tabular}
\end{center}

\subsection{\texorpdfstring{$\mathbb{Z}_d$}{Zd} gauge theory}\label{exgauge}
As in the previous example, consider $n$ non-gauge invariant links with Hilbert space $\mathcal{H}_d^{\otimes n}$ on a lattice. For a group $\mathcal{G}$, we can construct a $\mathcal{G}$-gauge invariant site by acting with a gauge invariant projector on the Hilbert spaces of the links. The form of the projector will depend on the group $\mathcal{G}$. In \cite{PhysRevD.109.126008} and \cite{doi:10.1142/S0219887825500033}, the authors explored the ASE gap when the group $\mathcal{G}$ is respectively $SU(2)$ and $\mathbb{Z}_d$. We revisit the results of the $\mathbb{Z}_d$ case, refining them with the aid of the tools we have developed in this paper.

Consider a single $n$-valent site and the $\mathbb{Z}_d$-gauge invariant projector
\begin{equation}
    \Pi_{\mathcal{E}_\mathcal{S}}=\frac{1}{d}\sum_{i=0}^{d-1}(X^i)^{\otimes n}\equiv\frac{1}{|\mathcal{S}|}\sum_{\B{a}\in \mathcal{S}}D_a,
\end{equation}
where $\mathcal{S}=\{(x,0,\dots,x,0),\,\,x\in \mathbb{Z}_d\,\}$. Thus $d_B=d^n$ while $d_S=d_B/|\mathcal{S}|=d^{n-1}$. We can then apply \cref{zerogap} to find that
\begin{align}
	\Delta M(\mathcal{E}_\mathcal{S})&= \frac{\alpha d^n - |A_S|d^{n-1}}{d^n(d^{n-1}+1)(d^{n-1}+2)(d^{n-1}+3)} && \alpha = \begin{cases}1, & d \text{ odd (multiqudit)} \\
	4, & d \text{ even (multiqudit)}\\
    d^{2(n-1)}, &d=2^{n-1} \text{ (multiqubit)}. \end{cases}
\end{align}
As we noted in \cref{zerogapsec}, the gap will be zero whenever $d$ is odd or we are dealing with qubits. Otherwise, we should look at the cardinality of the set $A_\mathcal{S}\equiv\{ \B{a} \in \mathbb{Z}_d^{2n}\ | \ 2\B{a}\in \mathcal{S},	\B{a}\in \mathcal{S}^\perp\}$. Given the simple structure of $\mathcal{S}$, it is possible to explicitly count how many elements are contained in $A_\mathcal{S}$, the details of which can be found in \cref{gauge}. We are led to the following expression of the ASE gap for even local dimensions,
\begin{equation}
\label{even_local}
    \Delta M(\mathcal{E}_\mathcal{S})=\frac{4-4^{n-1}}{(d^{n-1}+1)(d^{n-1}+2)(d^{n-1}+3)}.
\end{equation}
We note that even if \cref{even_local} is negative for $n\geq3$, the quantity scales like $O(d^{3(1-n)})$ so that the negative gap is effectively negligible for all practical purposes.

\subsection{[[4,2,2]] code}
\label{422}
Let us consider a CSS stabilizer code \cite{Nielsen_Chuang_2010}, which encodes two logical qubits in four. It is the smallest qubit stabilizer code which can detect a single qubit error, and moreover it is the smallest example of a toric code \cite{eczoo_stab_4_2_2}. The codewords are
\begin{align}
    |\overline{00}\rangle &= \frac{1}{\sqrt{2}}\Big(|0000\rangle + |1111\rangle\Big)\\
 |\overline{01}\rangle &= \frac{1}{\sqrt{2}}\Big(|0011\rangle + |1100\rangle\Big)\\
  |\overline{10}\rangle &= \frac{1}{\sqrt{2}}\Big(|0101\rangle + |1010\rangle\Big)\\
   |\overline{11}\rangle &= \frac{1}{\sqrt{2}}\Big(|0110\rangle + |1001\rangle\Big).
\end{align}
 The isotropic set $\mathcal{S}$ corresponding to this code is
\begin{align}
    \mathcal{S} = \Big\{&(0, 0, 0, 0, 0, 0, 0, 0), (0, 1, 0, 1, 0, 1, 0, 1),\\&(1, 0, 1, 0, 1, 0, 1, 0), (1, 1, 1, 1, 1, 1, 1, 1)\Big\}.\nonumber
\end{align}
As observed in \cref{zerogapsec}, the cardinality of $A_\mathcal{S}=\mathcal{S}^\perp$ is  $d_Sd_B=64$.

The ASE for this subspace is exactly $3/7$ which is the same as the ASE for a four-dimensional Hilbert space treated as two qubits: the magic gap is 0.
On the other hand, considering the encoded subspace as a four dimensional qudit, the magic gap is $-2/35$. Finally, we note that the first two basis states $\{|\overline{00}\rangle , |\overline{01}\rangle\}$ form a [[4,1,2]]  subcode. The ASE for the corresponding subspace is exactly $1/5$, the same as the ASE for a qubit, again giving zero magic gap.

\section{Numerical methods}
\label{numerical}

\begin{figure}[b]
\label{fig:monte_carlo}
\centering
\includegraphics[width=0.75\textwidth]{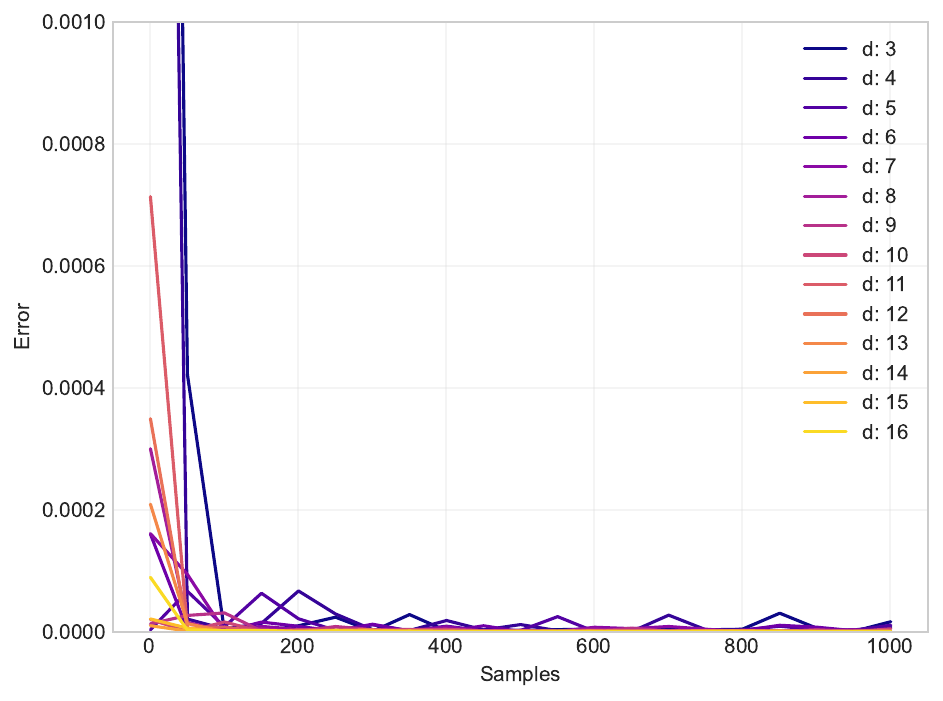}
\caption{We consider big Hilbert spaces of dimensions $3$ to $16$. For each, we select at random a $d_B-1$ dimensional subspace. We calculate the ASE exactly in each case, as well as approximate it using increasing numbers of Haar random samples. The error, given by $(\text{exact}-\text{mc})^2$, is plotted above.}.
\end{figure}

Using the results of Appendix \ref{App_Average}, we have developed an efficient \texttt{python} algorithm for calculating the average stabilizer entropy of an arbitrary subspace with respect to the (multi)qudit Weyl-Heisenberg group on the larger Hilbert space. On a standard laptop, it can calculate with respect to the five qubit WH group in approximately eleven minutes. Nevertheless, for calculations in larger dimensions and in order to use this quantity in optimization procedures, more efficient methods are necessary. To this end, we have also implemented a Monte Carlo approximation of the ASE of a subspace: we sample some number $M$ of Haar random states and average the stabilizer entropy over them. To evaluate the efficacy of this approximation, we plot in Figure \ref{fig:monte_carlo} the error in approximation for a variety of dimensions, and for increasing numbers of samples. 

The \texttt{python} library we have developed (\texttt{magicgap}) also contains useful routines for dealing with  all of the constructions used in this paper: totally isotropic subspaces, symmetrized qubits, the Majorana stellar representation, analyzing and visualizing quantum polyhedra, and extremizing the ASE of a subspace. It is open source and freely available at \cite{magicgap}.

\section{Conclusion}

In this paper, we have explored the interplay between the resource theory of nonstabilizerness and subspace embeddings. The embedding of one quantum system within another is ubiquitous, whether in the context of error correcting codes, the quantum simulation of qudits with $n$-qubits, or in symmetry constrained systems. The average stabilizer entropy (ASE) gap diagnoses the average difference in nonstabilizer resources between preparing a state in a chosen subspace and preparing the same state as such. We have provided explicit formulas for the ASE of an arbitrary Hilbert space, as well as the ASE of a subspace expressed in terms of the characteristic function of the corresponding projector. The latter, in particular, provides a tractable basis for numerical computation. Indeed, we have provided an efficient open source implementation of this paper's constructions, which we hope will be useful to researchers in quantum computation and beyond.

 We have proven that averaging the ASE of a subspace over all subspaces of fixed dimension is, in fact, the ASE of the larger Hilbert space, although the variance differs in the two cases. This allows us to see that in particular for $n$-qubit systems, as the subspace dimension approaches the dimension of the larger Hilbert space, the magic gap generically goes negative. We have also performed a exhaustive numerical search for subspaces which extremize the ASE for a variety of Hilbert space structures. We find that zero magic gap is attainable in general when the dimension of the subspace divides the overall dimension, and for the even multiqudit case, it is generally possible to achieve negative magic gap for certain combinations of dimensions.
 
Next, we have provided a sufficient condition for the ASE gap to be less than or equal to zero, focusing on the case of stabilizer codes with trivial group homomorphisms, and show that a) for $m$-qubits encoded in $n$-qubits, the ASE gap is 0, b) if $d$ is odd, then the ASE gap is also 0, and c) if $d$ is even, $d_S$ is even, and the isotropic set $\mathcal{S}$ defining the codespace is even, this implies zero magic gap for a single qudit, and negative magic gap for multiple qudits. To handle stabilizer codes beyond these two special cases, one must calculate the cardinality of the set $A_{\mathcal{S}}$. In the case of a $\mathbb{Z}_d$-invariant subspace, the cardinality of this set can be calculated exactly. Identifying additional cases in which the cardinality of this set can be derived from first principles would extend the current framework. Likewise, characterizing sufficient—and ideally also necessary—conditions for the ASE gap to be nonpositive beyond the setting of stabilizer codes remains an important direction for further development. We have also explored the effect of adding support on the complement of a given subspace, both with fixed and vector-dependent choices. While fixed support does not improve ASE when the subspace is already optimal, it can significantly reduce ASE for generic subspaces. These observations could be extended by optimizing not only over the subspace itself, but also over complementary subspaces to minimize average ASE.

Some numerical data still reveal unexpected patterns that need further theoretical explanation.
It is unclear why, in the single-qudit case, when the global Hilbert space has dimension a power of two, zero magic gap subspaces of dimension three consistently emerge.
Similarly, for the multiqubit and multiququart case, we can find a zero magic subspace whose dimension is 3 or 6. Indeed, we have also shown that for $d=2,n=2$, the symmetric subspace is an example of a zero magic gap subspace with $d_S=3$. One would be tempted to think that all multiples of three would have zero magic gap, but this does not hold since e.g.\ for four qubits, subspaces with dimensions $d_S=9,12,15$ have negative magic gap.

More broadly, a systematic investigation of the resource costs associated with embedding is needed. While selecting a subspace with minimal ASE minimizes nonstabilizer resources on average, its practical significance depends on how representative such averages are, given that most computations do not explore the entire Hilbert space. Additionally, the overhead of implementing encoding and decoding operations, as well as preparing highly entangled states, may offset the savings gained.  We conjecture that the amount of magic needed to perform the embedding is greater than or equal to the magic gained. In classical simulation contexts, the increased dimensionality likely outweighs reductions in sample complexity, though a formal proof remains open.
 
Finally, we have considered in this work an essentially static picture.
Introducing time evolution would require a measure of non-Cliffordness of unitaries to quantify this average quantity on a subspace. 
Exploring the relationship between minimizing this dynamic quantity and minimizing the ASE, as well as the dependence of the rate of magic production on the choice of subspace, constitute important questions for future works.

\begin{acknowledgments}
Matthew B. Weiss and Gianluca Cuffaro acknowledge support by the National Science Foundation through grants NSF-2210495 and OSI-2328774. Simone Cepollaro acknowledges support from the National Quantum Science and Technology Institute (NQSTI), PNRR MUR project PE0000023-NQSTI.

Special thanks to Blake Stacey for many helpful comments on this work.
\end{acknowledgments}

\bibliographystyle{utphys3}

\bibliography{bib}

\newpage
\appendix

\section{Qudit Extrema}
Here we show the results of extremizing the ASE for $d_B=3$ to $16$, treated as a single qudit, and for each choice of subspace dimension $d_S$.

\label{qubit_extrema}
\begin{figure}[H]
\label{fig:qudit_qudit}
\centering
\includegraphics[width=1\textwidth]{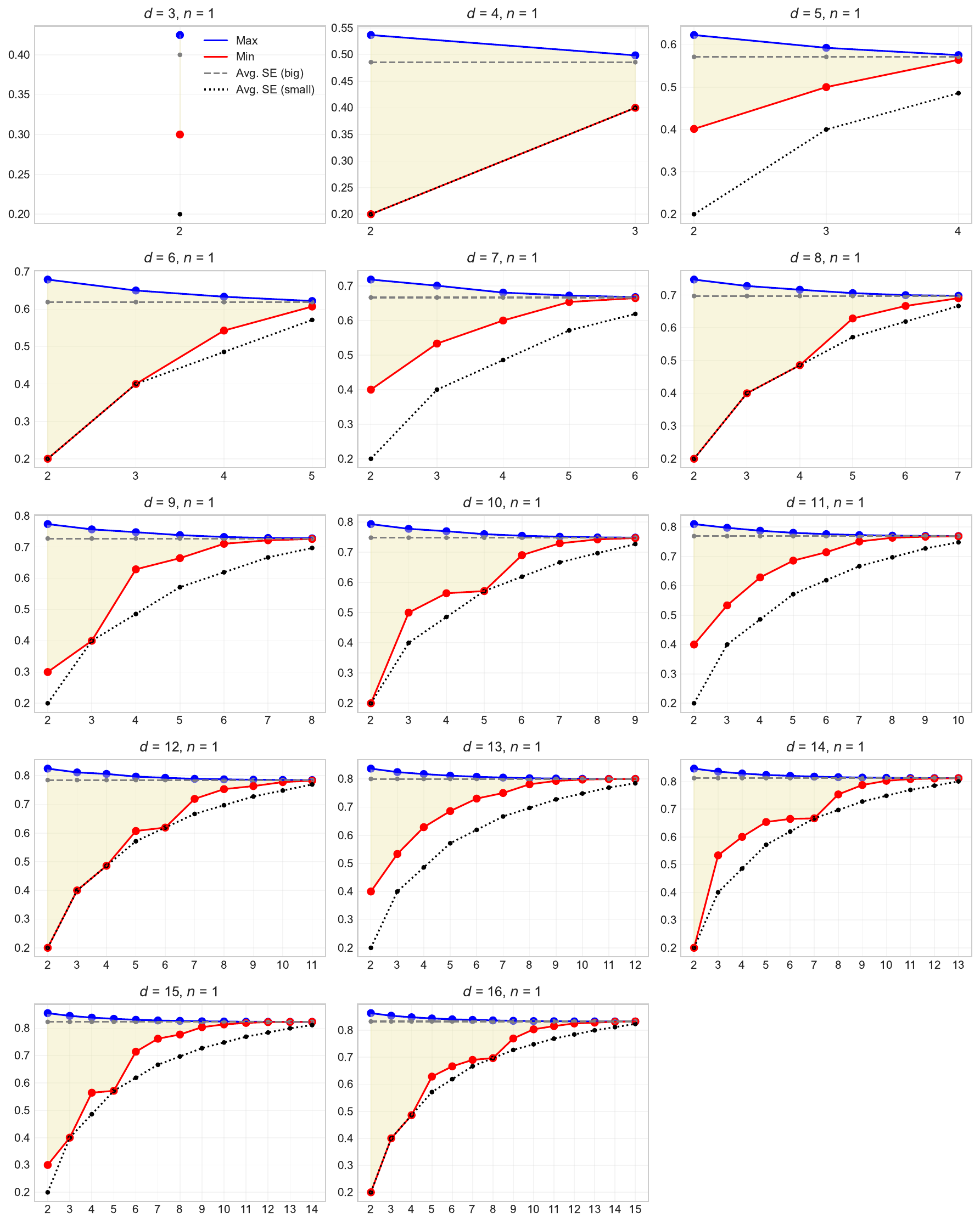}
\end{figure}

\section{Bounding the robustness of magic}
\label{RobustnessBounds}
In the spirit of \cite{Leone_2022, Heinrich2019}, let $\mathcal{D}(\psi)=\frac{1}{d^n}	\sum_{\B{a}}|\trace(D_{\B{a}}^\dagger \psi)|$. It is not hard to show that $D(\psi)=1$ for a stabilizer state, and this quantity satisfies \emph{convexity},
\begin{align}
\mathcal{D}\left(\sum_i x_i\sigma_i	\right)\leq \sum_i |x_i|D(\sigma_i).
\end{align}
 If $\psi=\sum_i x_i\sigma_i$ is an expansion in terms of stabilizer states, we have that  $\forall i: D(\sigma_i)=1$, and so $\frac{1}{d^n}	\sum_{\B{a}}|\trace(D_{\B{a}}^\dagger \psi)|\leq \mathcal{R}(\psi)$. As in the main body of the text, let $P_{\B{a}}(\psi)=\frac{1}{d^n}|\trace(D_{\B{a}}^\dagger \psi)|^2$, so that the R\'enyi stabilizer entropies may be defined as
\begin{align}
M_\alpha(\psi) =  \frac{1}{1-\alpha}\log \left(\sum_{\B{a}} P_{\B{a}}(\psi)^{\alpha}\right) - \log d^n.
\end{align}
In particular,
\begin{align}
M_{\frac{1}{2}}(\psi)&=2\log \left(\sum_i \frac{1}{d^{n/2}}|\trace(D_{\B{a}}^\dagger \psi)|\right)-\log d^n\\
&=2\log \left(\sum_i \frac{1}{d^{n}} |\trace(D_{\B{a}}^\dagger \psi)|\right) \leq 2\log(\mathcal{R}(\psi)).
\end{align}
Since for $0 \leq \alpha_1 \leq \alpha_2 \leq \infty$, the R\'enyi entropies satisfy $H_{\alpha_2} \leq H_{\alpha_1}$, we conclude that for $\alpha \ge 1/2$,
\begin{align}
M_\alpha(\psi) \leq 2 \log(\mathcal{R}(\psi)).
\end{align}
Meanwhile, the 2-Tsallis stabilizer entropy is
\begin{align}
M(\psi) = 1- \frac{1}{d^n}\sum_{\B{a}}|\trace(D_{\B{a}}^\dagger \psi)|^4,
\end{align}
which we can relate to the 2-R\'enyi entropy as
\begin{align}
M_2(\psi)&=-\log \left(	\frac{1}{d^{2n}}\sum_{\B{a}}|\trace(D_{\B{a}}^\dagger \psi)|^4\right) - \log d^n=-\log \left(	\frac{1}{d^{n}}\sum_{\B{a}}|\trace(D_{\B{a}}^\dagger \psi)|^4\right)\\
&=-\log\left(1-M(\psi)\right).
\end{align}
Plugging this into the bound, we arrive at $M(\psi)	\leq 1-\frac{1}{\mathcal{R}(\psi)^2}$. To get the final result, we note that $M_2(\psi) \leq -\log\frac{2}{d+1}$, so that $M(\psi)\leq 1-\frac{2}{d+1}<1$. Moreover $\mathcal{R}(\psi)> 0$. Thus we can express our lower bound on the robustness of magic in terms of the linear stabilizer entropy as
\begin{align}
		\mathcal{R}(\psi)^2\ge \frac{1}{1-M(\psi)}.
\end{align}

\section{Weingarten calculus}
\label{App_Weingarten}
Here we review some basic details regarding integration over the unitary group $U(d)$ with respect to the Haar measure. Integrals of polynomial type can be computed explicitly using the \emph{Weingarten calculus} \cite{Collins2006,Collins_2022}, a combinatorial technique that expresses unitary integrals as weighted sums over permutations. Let \( \mathcal{H} \cong \mathbb{C}^d \) be a Hilbert space of dimension \( d \), and consider a bounded operator \( O \in \mathcal{B}(\mathcal{H}^{\otimes n}) \). The average of the conjugated operator \( U^{\otimes n} O \, (U^\dagger)^{\otimes n} \) over the unitary group \( U(d) \), with respect to the normalized Haar measure \( dU \), admits the exact expression \cite{Leone2021quantumchaosis}
\begin{equation}
\int_{U(d)} dU \, (U^\dagger)^{\otimes n} \, O \, U^{\otimes n}
= \sum_{\pi, \sigma \in S_n} \mathrm{Wg}(\pi,\sigma) \, \mathrm{tr}(O T_\sigma) \, T_\pi,
\label{eq:weingarten_operator_form}
\end{equation}
where \( S_n \) is the symmetric group on \( n \) elements, \( T_\pi \) is the unitary representation of the permutation \( \pi \) on \( \mathcal{H}^{\otimes n} \), and \( \mathrm{Wg}(\pi, \sigma) \) are the Weingarten coefficients given by
\begin{equation}
\mathrm{Wg}(\pi, \sigma) = \sum_{\lambda \vdash n} \frac{d_\lambda^2}{(n!)^2} \, \frac{\chi_\lambda(\pi \sigma)}{D_\lambda}.
\label{eq:weingarten_formula}
\end{equation}
Here \( \lambda \vdash n \) runs over all partitions of \( t \), \( d_\lambda \) is the dimension of the irreducible representation of \( S_n \) labeled by \( \lambda \), \( \chi_\lambda \) is the corresponding character, and \( D_\lambda = s_\lambda(u)\) is the Schur polynomial evaluated at \( u=(1,\dots,1) \in \mathbb{C}^d \). 

In particular, we are interested in evaluating 
\be \sum_{\sigma}\text{Wg}(\pi,\sigma). \ee
Using \eqref{eq:weingarten_formula}, we have
\ba 
\sum_{\sigma} \text{Wg}(\pi,\sigma) & = & \sum_{\sigma}\sum_{\lambda\vdash n} \frac{d_{\lambda}^2}{(n!)^2}\frac{\chi_{\lambda}(\pi \sigma)}{D_{\lambda}} \\
& = & \sum_{\lambda}\frac{d_{\lambda}^2}{(n!)^2 D_{\lambda}} \sum_{\sigma}\chi_{\lambda}(\pi\sigma) .
\ea
Using the relation $\sum_{\sigma} \chi_{\lambda}(\sigma)=\delta_{\lambda,n}n!$, and the fact that $d_n=1$ and $D_n=\binom{d+n-1}{n}$, we arrive at
\be 
\sum_{\sigma} \text{Wg}(\sigma,\tau) = \frac{1}{n!}\binom{d+n-1}{n}^{-1} .
\ee

\section{Averaging the stabilizer entropy}
\label{App_Average}
In order to calculate both the intrinsic and extrinsic average stabilizer entropies, we must evaluate an expression of the form
\begin{align}
\trace\left(Q\Pi_{\mathcal{E}}^{\otimes 4}\Pi_{\text{sym}^4} \right),
\end{align}
where
\ba 
Q &=& \frac{1}{(d^n)^2}\sum_\textbf{a} D_\textbf{a} \ot D_\textbf{a}^{\dag}\ot D_\textbf{a}\ot D_\textbf{a}^{\dag} \\
\Pi_{\text{sym}^4}&=&\frac{1}{4!}\sum_{\pi \in S_4} T_{\pi} =\frac{1}{4!}\sum_{\pi \in S_4}\sum_{i,j,k,l=0}^{d-1}\ket{i,j,k,l}\bra{\pi(i,j,k,l)},
\ea 
for WH operators $\{D_{\B{a}}\}$ acting on $\mathcal{H}_d^{\otimes n}$. By direct substitution,
\begin{align}
\trace\left(Q\Pi_{\mathcal{E}}^{\otimes 4}\Pi_{\text{sym}^4} \right) &= \frac{1}{24(d^n)^2}  \sum_{\pi \in S_4} \sum_{\textbf{a}} \trace\Big(\big(D_\textbf{a}\Pi_{\mathcal{E}} \ot D_\textbf{a}^{\dag} \Pi_{\mathcal{E}} \ot D_\textbf{a} \Pi_{\mathcal{E}} \ot D_\textbf{a}^{\dag}\Pi_{\mathcal{E}}\big)  T_\pi\Big).
\end{align}
We begin by recalling that for a permutation operator $T_\pi$,
\ba 
\trace\Big((X_1\ot \ldots \ot X_n) T_{\pi}\Big)=\prod_{c\in \chi(\pi)}\trace\left(\prod_{i \in c}X_i\right), \ea
where $\chi(\pi)$ denotes the set of cycles of the permutation $\pi$, and $c$ a particular cycle. For example, suppose $\pi=(12)(3)(4)$. Then $\chi(\pi)=\{(12),(3),(4)\}$ and
\be 
\trace\Big((X_1\ot X_2 \ot X_3 \ot X_4) T_{\pi}\Big) =\trace(X_1 X_2)\trace(X_3)\trace(X_4).
\ee 
We shall work permutation by permutation and evaluate 
\begin{align}
\sum_{\textbf{a}} \trace\Big(\big(D_\textbf{a}\Pi_{\mathcal{E}} \ot D_\textbf{a}^{\dag} \Pi_{\mathcal{E}} \ot D_\textbf{a} \Pi_{\mathcal{E}} \ot D_\textbf{a}^{\dag}\Pi_{\mathcal{E}}\big)  T_\pi\Big)	
\end{align}
in each case. Ultimately, we seek an expression in terms of the characteristic function of the projector $\Pi_{\mathcal{E}}$, that is,
\begin{align}
\chi_\B{a}\equiv \chi_{\B{a}}(\Pi_{\mathcal{E}}) = \frac{1}{d^n}\trace(D_{\B{a}}^\dagger \Pi_{\mathcal{E}}) && \Pi_{\mathcal{E}} = \sum_{\B{a}}\chi_\B{a}D_{\B{a}}=\Pi_{\mathcal{E}} ^\dagger =  \sum_{\B{a}}\chi_\B{a}^*D_{\B{a}}^\dagger,
\end{align}
and we shall make use of the fact that
\begin{align}
D_{\B{a}}D_{\B{b}} = \tau^{[\B{a},\B{b}]}D_{\B{a}+\B{b}}
\end{align}
where $\tau = -e^{i\pi/d}$.

Before moving to the general case, we note that the intrinsic stabilizer entropy corresponds to the special case that $\Pi_{\mathcal{E}}=I$. It is straightforward to see that
\begin{itemize}
    \item  $\pi=(1)(2)(3)(4) \rightarrow  \sum_{\textbf{a}}  \trace(D_{\textbf{a}}) \trace (D_{\textbf{a}}^{\dag}) \trace (D_{\textbf{a}} )\trace (D_{\textbf{a}}^{\dag})=(d^n)^4 $,
    \item $\pi=(12)(3)(4)$ [all 6 swaps equal] $\rightarrow \sum_{\textbf{a}}  \trace (D_{\textbf{a}}D_{\textbf{a}}^{\dag})  \trace (D_{\textbf{a}}) \trace (D_{\textbf{a}}^{\dag}) =(d^n)^3  $,
    \item $\pi=(123)(4)$ [all 8 3-cycles equal] $\rightarrow  \sum_{\textbf{a}}  \trace (D_{\textbf{a}}D_{\textbf{a}}^{\dag} D_{\textbf{a}})  \trace (D_{\textbf{a}}^{\dag}) =(d^n)^2 $,
     \item $\pi=(1234)$ [all 6 4-cycles equal] $\rightarrow    \sum_{\textbf{a}}  \trace (D_{\textbf{a}}D_{\textbf{a}}^{\dag} D_{\textbf{a}}  D_{\textbf{a}}^{\dag} ) =(d^n)^3  $,
    \item $\pi\in\{(12)(34),(14)(23)\}$ $\rightarrow \sum_{\textbf{a}}  \trace (D_{\textbf{a}}D_{\textbf{a}}^{\dag}) \trace (D_{\textbf{a}}D_{\textbf{a}}^{\dag}) = (d^n)^4$,
     \item $\pi=(13)(24)\rightarrow  \sum_{\textbf{a}}  \trace (D_{\textbf{a}}^2) \trace ({D_{\textbf{a}}^{\dag2}}) =\sum_{\textbf{a}} (d^n)^2 \delta_{2{\textbf{a}},0} = \begin{cases} (d^n)^2, & d \text{ odd (multiqudit)} \\ 4(d^n)^2, &d \text{ even (multiqudit)} \\ (d^n)^4, & d=2 \text{ (multiqubit)} ,\end{cases}$
\end{itemize}
where the last case follows from the fact that for $n$-qubits, the $D_\textbf{a}$'s are always hermitian, corresponding to strings of Pauli operators, and $\trace{D_\textbf{a}^2}=d^n$ for any Pauli string. Thus
\begin{align}
	\trace\left(Q \Pi_{\text{sym}^4} \right)
&=\begin{cases} \frac{1}{8}(d^n+1)(d^n+3), & d \text{ odd (multiqudit)} \\ \frac{1}{8}(d^n+2)^2, & d \text{ even (multiqudit)} \\ \frac{1}{6}(d^n+1)(d^n+2), & d=2 \text{ (multiqubit)},\end{cases}
\end{align}
which is in agreement with the previously known results for multiqubit case \cite{Leone_2022}.

We now turn to the general case. Permutation by permutation, we have:

\begin{enumerate}
    \item $\pi=(1)(2)(3)(4)$ 
    \begin{align}
\sum_\B{a}  \trace(D_\B{a}\Pi_{\mathcal{E}}) \trace(D_\B{a}^\dagger \Pi_{\mathcal{E}}) \trace(D_\B{a}\Pi_{\mathcal{E}}) \trace(D_\B{a}^\dagger\Pi_{\mathcal{E}})	= (d^n)^4\sum_{\B{a}} |\chi_\B{a}|^4.
\end{align}

 \item $\pi\in\{(1)(2)(34), (1)(23)(4), (12)(3)(4), (14)(2)(3) \}$
    
    These permutations lead to
\begin{align}
\sum_{\B{a}}  \trace(D_{\B{a}}\Pi_{\mathcal{E}}D_{\B{a}}^{\dag}\Pi_{\mathcal{E}})  \trace(D_{\B{a}}\Pi_{\mathcal{E}}) \trace (D_{\B{a}}^{\dag}\Pi_{\mathcal{E}}).
\end{align}
We first note that
\begin{align}
D_\B{a}\Pi_{\mathcal{E}} D_\B{a}^\dagger &= D_\B{a}\left(\sum_\B{b}\chi_\B{b}D_\B{b}\right)D_\B{a}^\dagger =\sum_\B{b} \chi_\B{b}D_\B{a} D_\B{b}D_{-\B{a}}\\
&=	\sum_\B{b} \chi_\B{b}\tau^{[\B{a},\B{b}]}D_{\B{a}+\B{b}}D_{-\B{a}}=\sum_\B{b} \chi_\B{b}\tau^{[\B{a},\B{b}]}\tau^{[\B{a}+\B{b},-\B{a}]}D_{\B{b}}.
\end{align}
Since $[\B{a}+\B{b},-\B{a}]=[\B{a},\B{b}]$ and $\tau^2 = \omega$,
\begin{align}
D_\B{a}\Pi_{\mathcal{E}} D_\B{a}^\dagger=\sum_\B{b} \chi_\B{b}\omega^{[\B{a},\B{b}]}D_{\B{b}}.
\end{align}
Then
\begin{align}
	\trace(D_{\B{a}}\Pi_{\mathcal{E}}D_{\B{a}}^{\dag}\Pi_{\mathcal{E}})=& \trace\left(\sum_\B{b} \chi_\B{b}\omega^{[\B{a},\B{b}]}D_{\B{b}}\sum_{\B{c}}\chi_\B{c}^*D_{\B{c}}^\dagger\right)\\
	&=\sum_{\B{b}, \B{c}} \chi_\B{b}\omega^{[\B{a}, \B{b}]}\chi_\B{c}^*\trace(D_\B{b}D_\B{c}^\dagger)\\
	&=\sum_{\B{b}, \B{c}} \chi_\B{b}\omega^{[\B{a}, \B{b}]}\chi_\B{c}^* d\delta_{\B{b}, \B{c}}=d\sum_\B{b} |\chi_\B{b}|^2\omega^{[\B{a}, \B{b}]},
\end{align}
so that 
\begin{align}
\sum_{\B{a}}  \trace(D_{\B{a}}\Pi_{\mathcal{E}}D_{\B{a}}^{\dag}\Pi_{\mathcal{E}})  \trace(D_{\B{a}}\Pi_{\mathcal{E}}) \trace (D_{\B{a}}^{\dag}\Pi_{\mathcal{E}})	&=\sum_{\B{a}}  \left(d\sum_\B{b} |\chi_\B{b}|^2\omega^{[\B{a}, \B{b}]}\right)  d^2|\chi_\B{a}|^2\\
&=(d^n)^3\sum_{\B{a}, \B{b}}  |\chi_\B{a}|^2  |\chi_\B{b}|^2 \omega^{[\B{a}, \B{b}]}.
\end{align}
    
   \item $\pi= (1)(24)(3)$
   
  This permutation yields
\begin{align}
\sum_{\B{a}}  \trace(D_{\B{a}}^\dagger\Pi_{\mathcal{E}}D_{\B{a}}^{\dag}\Pi_{\mathcal{E}})  \trace(D_{\B{a}}\Pi_{\mathcal{E}}) \trace (D_{\B{a}}\Pi_{\mathcal{E}})	.
\end{align}
We note that
\begin{align}
D_\B{a}^\dagger\Pi_{\mathcal{E}} D_\B{a}^\dagger &= D_\B{a}^\dagger\left(\sum_\B{b}\chi_\B{b}D_\B{b}\right)D_\B{a}^\dagger =\sum_\B{b} \chi_\B{b}D_{-\B{a}} D_\B{b}D_{-\B{a}}\\
&=	\sum_\B{b} \chi_\B{b}\tau^{[-\B{a},\B{b}]}D_{-\B{a}+\B{b}}D_{-\B{a}}=\sum_\B{b} \chi_\B{b}\tau^{[-\B{a},\B{b}]}\tau^{[-\B{a}+\B{b},-\B{a}]}D_{\B{b}-2\B{a}}.
\end{align}
But  $[-\B{a}+\B{b},-\B{a}]=[\B{a},\B{b}]$ so that
\begin{align}
D_\B{a}^\dagger\Pi_{\mathcal{E}} D_\B{a}^\dagger=\sum_\B{b} \chi_\B{b}D_{\B{b}-2\B{a}}.
\end{align}
Then
\begin{align}
	\trace(D_{\B{a}}^\dagger\Pi_{\mathcal{E}}D_{\B{a}}^{\dag}\Pi_{\mathcal{E}}) &= \trace\left(\sum_\B{b} \chi_\B{b}D_{\B{b}-2\B{a}} \sum_{\B{c}}\chi_\B{c}^*D_{\B{c}}^\dagger\right)\\
	&=\sum_{\B{b}, \B{c}}\chi_\B{b}\chi^*_\B{c}  d\delta_{\B{b}-2\B{a},\B{c}}= d\sum_{\B{b}}\chi_\B{b}\chi_{\B{b}-2\B{a}}^*,
\end{align}
which allows us to arrive at
\begin{align}
	\sum_{\B{a}}  \trace(D_{\B{a}}^\dagger\Pi_{\mathcal{E}}D_{\B{a}}^{\dag}\Pi_{\mathcal{E}})  \trace(D_{\B{a}}\Pi_{\mathcal{E}}) \trace (D_{\B{a}}\Pi_{\mathcal{E}}) &= \sum_{\B{a}}  \left(d\sum_{\B{b}}\chi_\B{b}\chi_{\B{b}-2\B{a}}^*\right)d^2\chi_{\B{a}}^{*2}\\
	&=(d^n)^3\sum_{\B{a}, \B{b}}\chi_{\B{a}}^{*2}\chi_\B{b}\chi_{\B{b}-2\B{a}}^*\\
    &= (d^n)^3\sum_{\textbf{a}, \textbf{b}}\chi_{\textbf{a}}^{2}\chi_\textbf{b}\chi_{\textbf{b}+2\textbf{a}}^*.
\end{align} 
   
  \item $\pi= (13)(2)(4)$
  
  This permutation delivers
\begin{align}
\sum_{\textbf{a}}  \trace(D_{\textbf{a}}\Pi_{\mathcal{E}}D_{\textbf{a}}\Pi_{\mathcal{E}})  \trace(D_{\textbf{a}}^\dagger\Pi_{\mathcal{E}}) \trace (D_{\textbf{a}}^\dagger\Pi_{\mathcal{E}}),	
\end{align}
but since we are summing over all $\B{a}$, this must be the same as (3).
    
    \item $\pi\in\{(1)(234),(1)(243), (123)(4), (124)(3), (132)(4), (134)(2), (142)(3), (143)(2)\}$
    
    These permutations bring us
\begin{align}
\sum_{\B{a}}  \trace(D_{\B{a}}\Pi_{\mathcal{E}}D_{\B{a}}^{\dag}\Pi_{\mathcal{E}} D_{\B{a}} \Pi_{\mathcal{E}}) \trace(D_{\B{a}}^{\dag} \Pi_{\mathcal{E}})	.
\end{align}
On the one hand, we've seen that $
D_\B{a}\Pi_{\mathcal{E}} D_\B{a}^\dagger=\sum_\B{b} \chi_\B{b}\omega^{[\B{a},\B{b}]}D_{\B{b}}$. On the other hand,
\begin{align}
\Pi_\mathcal{E} D_\B{a} \Pi_\mathcal{E} &= 	\left(\sum_{\B{c}}\chi_\B{c}D_{\B{c}}\right)D_\B{a}\left(\sum_{\B{d}}\chi_\B{d}D_{\B{d}}\right)\\
&=\sum_{\B{c},\B{d}} \chi_\B{c}\chi_\B{d}D_{\B{c}} D_\B{a}D_{\B{d}}=\sum_{\B{c},\B{d}} \chi_\B{c}\chi_\B{d} \tau^{[\B{c}, \B{a}]}D_{\B{c}+\B{a}}D_{\B{d}}\\
&=\sum_{\B{c},\B{d}} \chi_\B{c}\chi_\B{d} \tau^{[\B{c}, \B{a}]}\tau^{[\B{c}+\B{a},\B{d}]}D_{\B{a}+\B{c}+\B{d}} = \sum_{\B{c},\B{d}} \chi_\B{c} \chi_\B{d} \tau^{-[\B{a}, \B{c}]+[\B{a},\B{d}]+[\B{c},\B{d}]}D_{\B{a}+\B{c}+\B{d}}.
\end{align}
Thus
\begin{align}
	\trace(D_{\B{a}}\Pi_{\mathcal{E}}D_{\B{a}}^{\dag}\Pi_{\mathcal{E}} D_{\B{a}} \Pi_{\mathcal{E}}) &= \trace\left(\sum_\B{b} \chi_\B{b}\omega^{[\B{a},\B{b}]}D_{\B{b}}\sum_{\B{c},\B{d}} \chi_\B{c}\chi_\B{d} \tau^{-[\B{a}, \B{c}]+[\B{a},\B{d}]+[\B{c},\B{d}]}D^\dagger_{-\B{a}-\B{c}-\B{d}} \right)\\
	&=\sum_{\B{b}, \B{c}, \B{d}} \chi_\B{b}\chi_\B{c}\chi_\B{d} \omega^{[\B{a},\B{b}]}\tau^{-[\B{a}, \B{c}]+[\B{a},\B{d}]+[\B{c},\B{d}]}d \delta_{\B{b}, -\B{a}-\B{c}-\B{d}}.
\end{align}
This requires $\B{d}=-\B{a}-\B{b}-\B{c}$, and so
\begin{align}
	-[\B{a}, \B{c}]+[\B{a},\B{d}]+[\B{c},\B{d}]&= -[\B{a}, \B{c}]+[\B{a}, -\B{a}-\B{b}-\B{c}]+[\B{c}, -\B{a}-\B{b}-\B{c}]\\
	&=-[\B{a}, \B{c}]-[\B{a}, \B{b}]-[\B{a},\B{c}]-[\B{c},\B{a}]-[\B{c},\B{b}]\\
	&=-[\B{a}, \B{b}]-[\B{a}, \B{c}]-[\B{c},\B{b}],
\end{align}
which leads to
\begin{align}
\trace(D_{\B{a}}\Pi_{\mathcal{E}}D_{\B{a}}^{\dag}\Pi_{\mathcal{E}} D_{\B{a}} \Pi_{\mathcal{E}}) &= d^n\sum_{\B{b},\B{c}}\chi_\B{b}\chi_\B{c}\chi^*_{\B{a}+\B{b}+\B{c}}\tau^{[\B{a}, \B{b}]-[\B{a}, \B{c}]-[\B{c}, \B{b}]},
\end{align}
from which we conclude
\begin{align}
\sum_{\B{a}}  \trace(D_{\B{a}}\Pi_{\mathcal{E}}D_{\B{a}}^{\dag}\Pi_{\mathcal{E}} D_{\B{a}} \Pi_{\mathcal{E}}) \trace(D_{\B{a}}^{\dag} \Pi_{\mathcal{E}})	&= (d^n)^2\sum_{\B{a}, \B{b}, \B{c}}\chi_\B{a}\chi_\B{b}\chi_\B{c}\chi^*_{\B{a}+\B{b}+\B{c}}\tau^{[\B{a}, \B{b}]-[\B{a}, \B{c}]-[\B{c}, \B{b}]}.
\end{align}

     \item $\pi\in\{(1234), (1432) \}$ 
    
    These permutations yield
\begin{align}
 &\sum_{\textbf{a}}  \trace(D_{\textbf{a}}\Pi_{\mathcal{E}}D_{\textbf{a}}^{\dag}\Pi_{\mathcal{E}} D_{\textbf{a}}\Pi_{\mathcal{E}}  D_{\textbf{a}}^{\dag}\Pi_{\mathcal{E}})\\
 &=\sum_\B{a}\trace\left(\sum_\B{b} \chi_\B{b}\omega^{[\B{a},\B{b}]}D_{\B{b}}\sum_{\B{c},\B{d}} \chi_\B{c} \chi_\B{d} \tau^{-[\B{a}, \B{c}]+[\B{a},\B{d}]+[\B{c},\B{d}]}D_{\B{a}+\B{c}+\B{d}} D_{\B{a}}^\dagger  \sum_{\B{e}}\chi_\B{e}D_{\B{e}}\right)\\
 &=\sum_{\B{a},\B{b}, \B{c}, \B{d},\B{e}}\chi_{\B{b}}\chi_\B{c}\chi_\B{d}\chi_\B{e}\omega^{[\B{a},\B{b}]}\tau^{-[\B{a}, \B{c}]+[\B{a},\B{d}]+[\B{c},\B{d}]}\trace(D_\B{b}D_{\B{a}+\B{c}+\B{d}} D_{\B{a}}^\dagger D_{\B{e}}),
\end{align}
where
\begin{align}
	\trace(D_\B{b}D_{\B{a}+\B{c}+\B{d}} D_{\B{a}}^\dagger D_{\B{e}}) &= \tau^{[\B{b}, \B{a}+\B{c}+\B{d}]}\tau^{[-\B{a}, \B{e}]}\trace(D_{\B{a}+\B{b}+\B{c}+\B{d}}D^\dagger_{\B{a}-\B{e}})\\
	&=\tau^{[\B{b}, \B{a}+\B{c}+\B{d}]-[\B{a}, \B{e}]}d\delta_{\B{a}+\B{b}+\B{c}+\B{d},\B{a}-\B{e}},
\end{align}
which forces $\B{e}=-\B{b}-\B{c}-\B{d}$ so that the argument of new phase becomes
\begin{align}
	[\B{b}, \B{a}+\B{c}+\B{d}]-[\B{a}, \B{e}]&=[\B{b}, \B{a}+\B{c}+\B{d}]-[\B{a}, -\B{b}-\B{c}-\B{d}]\\
	&=[\B{b}, \B{a}]+[\B{b}, \B{c}] + [\B{b}, \B{d}] + [\B{a}, \B{b}]+ [\B{a}, \B{c}]+[\B{a}, \B{d}]\\
	&=[\B{b}, \B{c}] + [\B{b}, \B{d}] + [\B{a}, \B{c}]+[\B{a}, \B{d}]\\
	&=[\B{a}+\B{b},\B{c} + \B{d}].
\end{align}
Thus
\begin{align}
 &\sum_{\textbf{a}}  \trace(D_{\textbf{a}}\Pi_{\mathcal{E}}D_{\textbf{a}}^{\dag}\Pi_{\mathcal{E}} D_{\textbf{a}}\Pi_{\mathcal{E}}  D_{\textbf{a}}^{\dag}\Pi_{\mathcal{E}})\\
&= d^n\sum_{\B{a},\B{b}, \B{c}, \B{d}}\chi_{\B{b}}\chi_\B{c}\chi_\B{d}\chi_\B{\B{b}+\B{c}+\B{d}}^*\tau^{2[\B{a},\B{b}]-[\B{a}, \B{c}]+[\B{a},\B{d}]+[\B{c},\B{d}]+[\B{a}+\B{b},\B{c} + \B{d}]}\\
&= d^n\sum_{\B{b}, \B{c}, \B{d}}\chi_{\B{b}}\chi_\B{c}\chi_\B{d}\chi_\B{\B{b}+\B{c}+\B{d}}^*\tau^{[\B{b},\B{c}+\B{d}]+[\B{c}, \B{d}]}\sum_\B{a}\omega^{[\B{a},\B{b}+\B{d}]}\\
&= d^n\sum_{\B{b}, \B{c}, \B{d}}\chi_{\B{b}}\chi_\B{c}\chi_\B{d}\chi_\B{\B{b}+\B{c}+\B{d}}^*\tau^{[\B{b},\B{c}+\B{d}]+[\B{c}, \B{d}]}d^2 \delta_{\B{b},-\B{d}}\\
&= (d^n)^3\sum_{\B{b}, \B{c}}\chi_{\B{b}}\chi_\B{c}\chi_{-\B{b}}\chi_{\B{b}+\B{c}-\B{b}}^*\tau^{[\B{b},\B{c}-\B{b}]+[\B{c}, -\B{b}]}\\
&= (d^n)^3\sum_{\B{b}, \B{c}}\chi_{\B{b}}\chi_\B{c}\chi_{\B{b}}^*\chi_{\B{c}}^*\tau^{[\B{b},\B{c}]-[\B{c}, \B{b}]}\\
&=(d^n)^3\sum_{\B{b}, \B{c}} |\chi_\B{b}|^2 |\chi_\B{c}|^2\omega^{[\B{b},\B{c}]},
 \end{align}
which shows that $(6)=(2)$. 

  \item $\pi\in\{(1243), (1342), (1324), (1423) \}$ 
   
These permutations deliver
\begin{align}
&\sum_{\textbf{a}}  \trace(D_{\textbf{a}}\Pi_{\mathcal{E}}D_{\textbf{a}}^{\dag}\Pi_{\mathcal{E}} D_{\textbf{a}}^{\dag}\Pi_{\mathcal{E}}  D_{\textbf{a}}\Pi_{\mathcal{E}})	\\
&=\sum_{\textbf{a}}  \trace\left( \sum_\B{b} \chi_\B{b}\omega^{[\B{a},\B{b}]}D_{\B{b}} \sum_\B{e} \chi_\B{e} D_\B{e}D_\B{a}^\dagger  \sum_{\B{c},\B{d}} \chi_\B{c} \chi_\B{d} \tau^{-[\B{a}, \B{c}]+[\B{a},\B{d}]+[\B{c},\B{d}]}D_{\B{a}+\B{c}+\B{d}}\right)\\
&=\sum_{\B{a}, \B{b}, \B{c}, \B{d}, \B{e}} \chi_\B{b}\chi_\B{c}\chi_\B{d}\chi_\B{e}\omega^{[\B{a},\B{b}]}\tau^{-[\B{a}, \B{c}]+[\B{a},\B{d}]+[\B{c},\B{d}]}\trace(D_\B{b}D_\B{e}D_\B{a}^\dagger D_{\B{a}+\B{c}+\B{d}}),
\end{align}
where
\begin{align}
	\trace(D_\B{b}D_\B{e}D_\B{a}^\dagger D_{\B{a}+\B{c}+\B{d}})&=\tau^{[\B{b}, \B{e}]}\tau^{[-\B{a}, \B{a}+\B{c}+\B{d}]}\trace(D_{\B{b}+\B{e}}D_{\B{c}+\B{d}})\\
	&=\tau^{[\B{b}, \B{e}]-[\B{a},\B{c}]-[\B{a}, \B{d}]}d \delta_{\B{b}+\B{e},-\B{c}-\B{d}}.
\end{align}
Thus $\B{e}=-\B{b}-\B{c}-\B{d}$. The argument to to the new phase factor becomes
\begin{align}
[\B{b}, -\B{b}-\B{c}-\B{d}]-[\B{a},\B{c}]-[\B{a},\B{d}] &= -[\B{b}, \B{c}]-[\B{b}, \B{d}]-[\B{a},\B{c}]-[\B{a},\B{d}].
\end{align}
Thus
\begin{align}
	&\sum_{\textbf{a}}  \trace(D_{\textbf{a}}\Pi_{\mathcal{E}}D_{\textbf{a}}^{\dag}\Pi_{\mathcal{E}} D_{\textbf{a}}^{\dag}\Pi_{\mathcal{E}}  D_{\textbf{a}}\Pi_{\mathcal{E}})\\
&=d^n\sum_{\B{a}, \B{b}, \B{c}, \B{d}} \chi_\B{b}\chi_\B{c}\chi_\B{d}\chi^*_{\B{b}+ \B{c}+\B{d}}\tau^{2[\B{a}, \B{b}]-[\B{a}, \B{c}]+[\B{a},\B{d}]+[\B{c},\B{d}]-[\B{b}, \B{c}]-[\B{b}, \B{d}]-[\B{a},\B{c}]-[\B{a},\B{d}]}\\
&=d^n\sum_{\B{a}, \B{b}, \B{c}, \B{d}} \chi_\B{b}\chi_\B{c}\chi_\B{d}\chi^*_{\B{b}+ \B{c}+\B{d}}\tau^{2[\B{a}, \B{b}-\B{c}]-[\B{b}, \B{c}+\B{d}]+[\B{c},\B{d}]}\\
&=d^n\sum_{\B{b}, \B{c}, \B{d}} \chi_\B{b}\chi_\B{c}\chi_\B{d}\chi^*_{\B{b}+ \B{c}+\B{d}}\tau^{-[\B{b}, \B{c}+\B{d}]+[\B{c},\B{d}]}\sum_{\B{a}}\omega^{[\B{a}, \B{b}-\B{c}]}\\
&=(d^n)^3\sum_{\B{b}, \B{c}, \B{d}} \chi_\B{b}\chi_\B{c}\chi_\B{d}\chi^*_{\B{b}+ \B{c}+\B{d}}\tau^{-[\B{b}, \B{c}+\B{d}]+[\B{c},\B{d}]}\delta_{\B{b}, \B{c}}\\
&=(d^n)^3\sum_{\B{b}, \B{d}} \chi_\B{b}\chi_\B{b}\chi_\B{d}\chi^*_{2\B{b}+\B{d}}\tau^{-[\B{b}, \B{b}+\B{d}]+[\B{b},\B{d}]}\\
&=(d^n)^3\sum_{\B{b}, \B{d}} \chi_\B{b}^2\chi_\B{d}\chi^*_{2\B{b}+\B{d}}\\
&=(d^n)^3\sum_{\B{b}, \B{d}} \chi_\B{b}^{*2}\chi_\B{d}\chi^*_{\B{d}-2\B{b}},
\end{align}
which establishes that $(7)=(3)=(4)$.
    
    \item $\pi\in \{(12)(34), (14)(23)\}$ 

These permutations ask us to consider
\begin{align}
 \sum_{\textbf{a}}  \trace(D_{\textbf{a}}\Pi_{\mathcal{E}} D_{\textbf{a}}^{\dag}\Pi_{\mathcal{E}} ) \trace (D_{\textbf{a}}\Pi_{\mathcal{E}} D_{\textbf{a}}^{\dag}\Pi_{\mathcal{E}} ) 	.
\end{align}
We have
\begin{align}
	\trace(D_{\textbf{a}}\Pi_{\mathcal{E}} D_{\textbf{a}}^{\dag}\Pi_{\mathcal{E}} )&= \trace\left(\sum_\B{b} \chi_\B{b}\omega^{[\B{a},\B{b}]}D_{\B{b}} \sum_\B{c}\chi_\B{c}^*D_\B{c}^\dagger \right)\\
	&=\sum_{\B{b},\B{c}} \chi_\B{b} \chi_\B{c}^*\omega^{[\B{a}, \B{b}]}d\delta_{\B{b}, \B{c}}=d\sum_{\B{b}} |\chi_\B{b}|^2\omega^{[\B{a},\B{b}]},
\end{align}
and so
\begin{align}
 &\sum_{\textbf{a}}  \trace(D_{\textbf{a}}\Pi_{\mathcal{E}} D_{\textbf{a}}^{\dag}\Pi_{\mathcal{E}} ) \trace (D_{\textbf{a}}\Pi_{\mathcal{E}} D_{\textbf{a}}^{\dag}\Pi_{\mathcal{E}} )\\
 &=(d^n)^2\sum_\B{a}\sum_{\B{b}} |\chi_\B{b}|^2\omega^{[\B{a},\B{b}]}\sum_{\B{c}} |\chi_\B{c}|^2\omega^{[\B{a},\B{c}]}\\
 &=(d^n)^2 \sum_{\B{b},\B{c}}|\chi_\B{b}|^2|\chi_\B{c}|^2\sum_\B{a}\omega^{[\B{a}, \B{b+c}]}\\
  &=(d^n)^2 \sum_{\B{b},\B{c}}|\chi_\B{b}|^2|\chi_\B{c}|^2 d^2\delta_{\B{b}, -\B{c}}\\
  &=(d^n)^4 \sum_{\B{b}}|\chi_\B{b}|^2|\chi_{\B{b}}^*|^2\\
  &= (d^n)^4 \sum_{\B{b}}|\chi_\B{b}|^4,
\end{align}
which shows that $(8)=(1)$.

  \item $\pi=(13)(24)$  
    
    This final permutation yields
\begin{align}
\sum_{\textbf{a}}  \trace(D_{\textbf{a}} \Pi_{\mathcal{E}}D_{\textbf{a}} \Pi_{\mathcal{E}})\trace(D_{\textbf{a}}^{\dag}\Pi_{\mathcal{E}}D_{\textbf{a}}^{\dag}\Pi_{\mathcal{E}})	.
\end{align}
Recalling that $
D_\B{a}^\dagger\Pi_{\mathcal{E}} D_\B{a}^\dagger=\sum_\B{b} \chi_\B{b}D_{\B{b}-2\B{a}}$, we have
\begin{align}
\trace\left(\sum_\B{b} \chi_\B{b}D_{\B{b}-2\B{a}}\sum_\B{c}\chi^*_\B{c}D_\B{c}^\dagger \right) &= \sum_{\B{b}, \B{c}} \chi_\B{b}\chi_\B{c}^* d\delta_{\B{b}-2\B{a},\B{c}}\\
&=d^n\sum_{\B{b}}\chi_\B{b}\chi^*_{\B{b}-2\B{a}},
\end{align}
so that
\begin{align}
\sum_{\textbf{a}}  \trace(D_{\textbf{a}} \Pi_{\mathcal{E}}D_{\textbf{a}} \Pi_{\mathcal{E}})\trace(D_{\textbf{a}}^{\dag}\Pi_{\mathcal{E}}D_{\textbf{a}}^{\dag}\Pi_{\mathcal{E}})&= (d^n)^2 	\sum_{\B{a}}\left|\sum_{\B{b}}\chi_\B{b}\chi^*_{\B{b}-2\B{a}}\right|^2.
\end{align}
\end{enumerate}
Putting it all together, we find
\begin{align}
\trace\left(Q\Pi_{\mathcal{E}}^{\otimes 4}\Pi_{\text{sym}^4} \right) &= \frac{1}{24}\Bigg\{3 (d^n)^2\sum_{\B{a}} |\chi_\B{a}|^4 + 6 d^n\sum_{\B{a}, \B{b}}  |\chi_\B{a}|^2  |\chi_\B{b}|^2 \omega^{[\B{a}, \B{b}]}+6 d^n\sum_{\textbf{a}, \textbf{b}}\chi_{\textbf{a}}^{2}\chi_\textbf{b}\chi_{\textbf{b}+2\textbf{a}}^*\\
&+8 \sum_{\B{a}, \B{b}, \B{c}}\chi_\B{a}\chi_\B{b}\chi_\B{c}\chi^*_{\B{a}+\B{b}+\B{c}}\tau^{[\B{a}, \B{b}] -[\B{a}, \B{c}]-[\B{c}, \B{b}]}+ \sum_{\B{a}}\left|\sum_{\B{b}}\chi_\B{b}\chi^*_{\B{b}-2\B{a}}\right|^2\Bigg\}\nonumber
\end{align}
where $\chi_\B{a} = \frac{1}{d^n}\trace(D_\B{a}^\dagger \Pi_\mathcal{E})$.

\section{\texorpdfstring{Proof of \cref{zerogap}}{Proof of Theorem 2}}\label{proofzerogap}
In this section we will state and prove \cref{zerogap}.
\begin{theorem}
    The average stabilizer gap of a stabilizer codespace with  isotropic set $\mathcal{S}$ and trivial group homomorphism $f(\B{a})=0\,\,\,\forall \B{a}\in \mathcal{S}$, with corresponding isometry $\mathcal{E}_\mathcal{S}$, is 
   \begin{align}
	\Delta M(\mathcal{E}_\mathcal{S})&= \frac{\alpha d_B - |A_\mathcal{S}|d_S}{d_B(d_S+1)(d_S+2)(d_S+3)} && \alpha = \begin{cases}1, & d_S \text{ odd (multiqudit)} \\
	4, & d_S \text{ even (multiqudit)}\\
    d_S^2, &d_S=2^m \text{ (multiqubit)}. \end{cases}
\end{align}
where $|A_\mathcal{S}|$ is the cardinality of the set
\begin{align}
A_\mathcal{S} = \Big\{ \B{a} \in \mathbb{Z}_d^{2n}\ | \ 2\B{a}\in \mathcal{S}\quad and\quad 	\B{a}\in S^\perp \Big\}.
\end{align}
with $S^\perp\equiv\{ \B{a}\in \mathbb{Z}_d^{2n} \,\,| \,\,\forall \B{b}\in \mathcal{S}:[\B{a}, \B{b}]=0\}$
\end{theorem}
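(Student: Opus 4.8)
\textit{Setup.} The plan is to take the master formulas of \cref{ase_gap} as the starting point, writing
\[
\Delta M(\mathcal{E}_\mathcal{S}) = \binom{d_S+3}{4}^{-1}\!\Big[d_S\,\trace\!\big(Q_S\,\Pi^S_{\text{sym}^4}\big) - d_B\,\trace\!\big(Q_B\,\Pi_{\mathcal{E}_\mathcal{S}}^{\otimes 4}\,\Pi^B_{\text{sym}^4}\big)\Big],
\]
and evaluating the two traces permutation by permutation exactly as in \cref{App_Average}. The first ingredient is the characteristic function of the codespace projector: since the homomorphism is trivial, $\Pi_{\mathcal{E}_\mathcal{S}} = \frac{1}{|\mathcal{S}|}\sum_{\B{a}\in\mathcal{S}}D_\B{a}$, and orthogonality of the displacement operators gives $\chi_\B{a}(\Pi_{\mathcal{E}_\mathcal{S}}) = \frac{1}{|\mathcal{S}|}$ for $\B{a}\in\mathcal{S}$ and $0$ otherwise---real, nonnegative, and supported exactly on the isotropic set. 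Throughout I will use, as guaranteed by \cref{stab_subspaces}, that $\mathcal{S}$ lifts consistently to an abelian subgroup of the Weyl--Heisenberg group, so that products $D_{\B{a}_1}\!\cdots D_{\B{a}_k}$ with all $\B{a}_i\in\mathcal{S}$ equal $D_{\B{a}_1+\cdots+\B{a}_k}$ with no residual phase and with reduced index again in $\mathcal{S}$ (in particular the even-dimensional signs of \eqref{evenodd} are trivial for such indices), and that $[\B{a},\B{b}]=0$ for $\B{a},\B{b}\in\mathcal{S}$.

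\textit{All permutations cancel except $(13)(24)$.} The crux is that for every $\sigma\in S_4$ other than $(13)(24)$, the contribution of $\sigma$ to $d_S\binom{d_S+3}{4}^{-1}\trace(Q_S\Pi^S_{\text{sym}^4})$ equals its contribution to $d_B\binom{d_S+3}{4}^{-1}\trace(Q_B\Pi_{\mathcal{E}_\mathcal{S}}^{\otimes 4}\Pi^B_{\text{sym}^4})$, so the entire gap is localized in one term. To see this I would revisit the five representative cases of \cref{App_Average}: for every permutation but $(13)(24)$, the extrinsic summand is a product of traces in which the free displacement index is multiplied against a bare factor $\chi_\B{a}$ (or $\chi_\B{a}^2$), which pins it into $\mathcal{S}$; isotropy then trivializes all $\omega$ and $\tau$ phases, closure of $\mathcal{S}$ returns every shifted index ($\B{a}+\B{b}+\B{c}$, $\B{b}\pm 2\B{a}$, etc.) back into $\mathcal{S}$, and the sum collapses to a pure count of tuples in $\mathcal{S}$, equal to $|\mathcal{S}|^{\,c-4}$ times the corresponding intrinsic count ($c$ the number of cycles of $\sigma$). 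Dividing by the respective normalizations and using $|\mathcal{S}|=d_B/d_S$ makes the two contributions coincide. This step is routine but voluminous; isolating why only $(13)(24)$ fails to collapse---it is the unique permutation whose free index enters \emph{solely} through the ``doubled'' factor $\chi^*_{\B{b}-2\B{a}}$, never as a bare $\chi_\B{a}$---is the conceptual point.

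\textit{The anomalous permutation.} For $\sigma=(13)(24)$ the extrinsic summand is $\trace(D_\B{a}\Pi_{\mathcal{E}_\mathcal{S}}D_\B{a}\Pi_{\mathcal{E}_\mathcal{S}})\,\trace(D_\B{a}^\dagger\Pi_{\mathcal{E}_\mathcal{S}}D_\B{a}^\dagger\Pi_{\mathcal{E}_\mathcal{S}})$ with $\B{a}$ unconstrained. Here I would compute directly: $D_\B{a}\Pi_{\mathcal{E}_\mathcal{S}}D_\B{a} = \frac{1}{|\mathcal{S}|}\sum_{\B{s}\in\mathcal{S}}D_{2\B{a}+\B{s}}$ (the phases $\tau^{[\B{a},\B{s}]}\tau^{[\B{s},\B{a}]}$ cancel), hence $\trace(D_\B{a}\Pi_{\mathcal{E}_\mathcal{S}}D_\B{a}\Pi_{\mathcal{E}_\mathcal{S}}) = \frac{1}{|\mathcal{S}|^2}\sum_{\B{s},\B{t}\in\mathcal{S}}\omega^{[\B{a},\B{t}]}\trace(D_{2\B{a}+\B{s}+\B{t}})$ after using $\tau^2=\omega$ and isotropy. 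The trace is $d_B$ precisely when $2\B{a}+\B{s}+\B{t}\equiv 0\pmod d$; since $\B{s}+\B{t}$ sweeps $\mathcal{S}$, this is solvable iff $2\B{a}\in\mathcal{S}$, and then for each $\B{t}$ the index $\B{s}$ is determined, so $\trace(D_\B{a}\Pi_{\mathcal{E}_\mathcal{S}}D_\B{a}\Pi_{\mathcal{E}_\mathcal{S}}) = \frac{d_B}{|\mathcal{S}|^2}\,\mathbbm{1}[2\B{a}\in\mathcal{S}]\sum_{\B{t}\in\mathcal{S}}\omega^{[\B{a},\B{t}]}$. The symplectic character sum over the finite abelian group $\mathcal{S}$ is $|\mathcal{S}|$ if $\B{a}\in\mathcal{S}^\perp$ and $0$ otherwise, so $\trace(D_\B{a}\Pi_{\mathcal{E}_\mathcal{S}}D_\B{a}\Pi_{\mathcal{E}_\mathcal{S}}) = \frac{d_B}{|\mathcal{S}|}\,\mathbbm{1}[\B{a}\in A_\mathcal{S}]$, and identically for the daggered trace (replace $\B{a}\mapsto-\B{a}$; $A_\mathcal{S}$ is inversion-symmetric). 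Summing over $\B{a}$, the $(13)(24)$ contribution to the extrinsic trace is $\frac{|A_\mathcal{S}|}{24|\mathcal{S}|^2}$, whereas the intrinsic $(13)(24)$ term for the small system is $\frac{1}{24 d_S^2}\sum_\B{a}\big|\trace\big((D^S_\B{a})^2\big)\big|^2 = \frac{\alpha}{24}$ with $\alpha$ exactly the $(13)(24)$ value tabulated in \cref{App_Average} for the small system's Weyl--Heisenberg structure---$1$, $4$, or $d_S^2$ in the three listed cases.

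\textit{Assembly and main obstacle.} Combining, $\Delta M(\mathcal{E}_\mathcal{S}) = \binom{d_S+3}{4}^{-1}\big(\frac{d_S\alpha}{24} - \frac{d_B|A_\mathcal{S}|}{24|\mathcal{S}|^2}\big)$, and substituting $\binom{d_S+3}{4} = \frac{d_S(d_S+1)(d_S+2)(d_S+3)}{24}$ together with $|\mathcal{S}|=d_B/d_S$ yields $\Delta M(\mathcal{E}_\mathcal{S}) = \frac{\alpha d_B - |A_\mathcal{S}|d_S}{d_B(d_S+1)(d_S+2)(d_S+3)}$, as claimed. I expect the hard part to be the last step of the previous paragraph: recognizing that the anomalous trace is precisely a symplectic character sum over $\mathcal{S}$ whose support is $\{\B{a}:2\B{a}\in\mathcal{S}\}\cap\mathcal{S}^\perp=A_\mathcal{S}$---this is where the ``doubling'' $2\B{a}$ must be played off against both closure and the symplectic orthogonality of $\mathcal{S}$. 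A subsidiary but genuine subtlety is keeping the even-dimensional $(-1)^{[\B{x},\B{y}]}$ sign factors of \eqref{evenodd} under control, which is handled once and for all by the existence of the consistent abelian lift of $\mathcal{S}$.
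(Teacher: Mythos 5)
Your proposal is correct, and at bottom it runs the same computation as the paper: both evaluate $\trace\big(Q_B\,\Pi_{\mathcal{E}_\mathcal{S}}^{\otimes 4}\,\Pi^B_{\text{sym}^4}\big)$ permutation class by permutation class using the characteristic function $\chi_{\B{a}}=\frac{1}{|\mathcal{S}|}\delta_{\B{a}\in\mathcal{S}}$, and in both the set $A_\mathcal{S}$ emerges from the same character sum $\sum_{\B{t}\in\mathcal{S}}\omega^{[\B{a},\B{t}]}$ in the $(13)(24)$ term. Your organization differs in two ways that are worth noting. First, instead of computing the full extrinsic trace (the paper's route, which yields $\frac{1}{24}\big[\tfrac{3d_B^2}{|\mathcal{S}|^3}+\tfrac{12 d_B}{|\mathcal{S}|^2}+\tfrac{8}{|\mathcal{S}|}+\tfrac{|A_\mathcal{S}|}{|\mathcal{S}|^2}\big]$) and then subtracting the tabulated intrinsic average, you pair each permutation's intrinsic and extrinsic contributions and show they cancel except for $(13)(24)$; your diagnosis that $(13)(24)$ is the unique class in which the free index enters only through the doubled factor, never through a bare $\chi_{\B{a}}$, is exactly right and makes the origin of the gap transparent. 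Second, your operator identity $D_{\B{a}}\Pi_{\mathcal{E}_\mathcal{S}}D_{\B{a}}=\frac{1}{|\mathcal{S}|}\sum_{\B{s}\in\mathcal{S}}D_{2\B{a}+\B{s}}$ reduces the anomalous term to $\trace(D_{\B{c}})$ with $\B{c}\equiv 0\pmod d$, where the even-dimensional sign of \eqref{evenodd} is automatically $+1$, so you sidestep the $(-1)^{[\B{x},\B{y}]}$ bookkeeping that the paper carries explicitly through its fifth term. Two small caveats, neither fatal: the parenthetical ``$|\mathcal{S}|^{c-4}$ times the intrinsic count'' is not the right exponent --- the ratio of the extrinsic to the intrinsic raw contribution is uniformly $|\mathcal{S}|$ for every non-anomalous class, independent of the cycle count $c$, which is precisely what the mismatch $\tfrac{1}{24 d_B}$ versus $\tfrac{1}{24 d_S}$ in the normalizations cancels; and your appeal to a ``consistent abelian lift'' of $\mathcal{S}$ to trivialize the residual signs in the non-anomalous terms is at the same level of rigor as the paper's own assertion that $[\B{x},\B{b}+2\B{a}]$ vanishes identically for indices in $\mathcal{S}$, so you incur no gap relative to the paper there.
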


\begin{proof}[\cref{zerogap}]
Let $\mathcal{S}$ be a totally isotropic set in the sense of \cref{stab_subspaces}, and let
\begin{align}
    \Pi_{\mathcal{E}_{\mathcal{S}}} = \frac{1}{|\mathcal{S}|}\sum_{\B{a}\in \mathcal{S}} D_{\B{a}}
\end{align}
be the projector onto the corresponding stabilizer codespace. We will consider the characteristic function $\chi_{\B{a}}$ of the projector $\Pi_{\mathcal{E}_S}$. As we observed in \cref{ase_gap}, from
\begin{align}
D_\B{a}=D_{\B{x}+d\B{y}}=\begin{cases}
        D_{\B{x}} &\text{$d$ odd}\\ (-1)^{[\B{x},\B{y}]}D_{\B{x}}\quad &\text{$d$ even,}
        \end{cases}
\end{align}
where $\B{a} = \B{x}+ d\B{y}$ for $\B{x} = \B{a} \mod d$ and $\B{y} = \frac{1}{d}(\B{a}-\B{x})$, we may derive a symmetry of the characteristic function,
\begin{align}
\chi_{\B{a}}=\chi_{\B{x}+d\B{y}}=\begin{cases}
        \chi_{\B{x}} &\text{$d$ odd}\\ (-1)^{[\B{x},\B{y}]}\chi_{\B{x}}\quad &\text{$d$ even},
        \end{cases}	
\end{align}
where we note 
\begin{align}
[\B{x}, \B{y}]  &=\Big[\B{x},  \frac{1}{d}(\B{a}-\B{x})\Big]=\frac{1}{d}\Big[\B{x}, \B{a}\Big].
\end{align}
The characteristic function of our projector $\Pi_{\mathcal{E}_\mathcal{S}}$ is then
\begin{align}
\chi_{\B{a}}=\frac{1}{|\mathcal{S}|}\begin{cases}
       \delta_{\B{x} \in S} &\text{$d$ odd}\\ (-1)^{\frac{1}{d}[\B{x},\B{a}]}\delta_{\B{x} \in \mathcal{S}}\quad &\text{$d$ even}.
        \end{cases}	
\end{align}
 We may now evaluate $\trace\left(Q\Pi_{\mathcal{E}_S}^{\otimes 4}\Pi_{\text{sym}^4} \right)$ using the expression derived in Appendix \ref{App_Average}. Clearly for the first two terms,
\begin{align}
\sum_{\B{a}} |\chi_{\B{a}}|^4 && 	\sum_{\B{a}, \B{b}}  |\chi_{\B{a}}|^2  |\chi_{\B{b}}|^2 \omega^{[\B{a}, \B{b}]},
\end{align}
the phase factor makes no difference. Thus for both $d$ odd and even,
\begin{align}
	\sum_{\B{a}} |\chi_{\B{a}}|^4&= \frac{1}{|\mathcal{S}|^4}\sum_{\B{a}} \delta_{\B{a}\in \mathcal{S}}=\frac{1}{|\mathcal{S}|^3}
\end{align}
\begin{align}
	\sum_{\B{a}, \B{b}}  |\chi_{\B{a}}|^2  |\chi_{\B{b}}|^2 \omega^{[\B{a}, \B{b}]}&=\frac{1}{|\mathcal{S}|^4}\sum_{\B{a}, \B{b}}\delta_{\B{a}\in S}\delta_{\B{b}\in \mathcal{S}} =\frac{1}{|S|^2}.
\end{align}
For the next two terms,
\begin{align}	
\sum_{\B{a}, \B{b}}\chi_{\B{a}}^{2}\chi_{\B{b}}\chi_{\B{b}+2\B{a}}^* &&  \sum_{\B{a}, \B{b}, \B{c}}\chi_{\B{a}}\chi_{\B{b}}\chi_{\B{c}}\chi^*_{\B{a}+\B{b}+\B{c}}\tau^{[\B{a}, \B{b}]-[\B{a},\B{c}]-[\B{c}, \B{b}]},
\end{align}
we argue as follows. If $d$ is even, we have
\begin{align}
\sum_{\B{a}, \B{b}}\chi_{\B{a}}^{2}\chi_{\B{b}}\chi_{\B{b}+2\B{a}}^* =\frac{1}{|\mathcal{S}|^4}\sum_{\B{a}, \B{b}}\delta_{\B{a}\in \mathcal{S}}\delta_{\B{b}\in \mathcal{S}}(-1)^{[\B{x}, \B{y}]}\delta_{\B{x}\in \mathcal{S}}, 	
\end{align}
where  $\B{c}=\B{b}+2\B{a}=\B{x}+d\B{y}$. Notice that the only terms that survive are those for which $\B{a}, \B{b}, \B{x}\in \mathcal{S}$, and so in particular $[\B{x}, \B{a}]=[\B{x}, \B{b}]=0$.  Thus
\begin{align}
[\B{x}, \B{y}]=\frac{1}{d}\Big[\B{x},\B{c}\Big]=\frac{1}{d}\Big[\B{x}, \B{b}+2\B{a}\Big]=0.
\end{align}
We conclude that 
\begin{align}
\sum_{\B{a}, \B{b}}\chi_{\B{a}}^{2}\chi_{\B{b}}\chi_{\B{b}+2\B{a}}^*  &=\frac{1}{|\mathcal{S}|^4}\sum_{\B{a}, \B{b}}\delta_{\B{a}\in \mathcal{S}}\delta_{\B{b}\in \mathcal{S}}\delta_{\B{x}\in \mathcal{S}}=\frac{1}{|\mathcal{S}|^2},
\end{align}
where $\B{x} = \B{b} + 2\B{a} \mod d$, and this is the same in the case that $d$ is odd.
The same argument applies to the next term, so that
\begin{align}
	 \sum_{\B{a}, \B{b}, \B{c}}\chi_{\B{a}}\chi_{\B{b}}\chi_{\B{c}}\chi^*_{\B{a}+\B{b}+\B{c}}\tau^{[\B{a}, \B{b}]-[\B{a},\B{c}]-[\B{c}, \B{b}]}&=\frac{1}{|\mathcal{S}|^4} \sum_{\B{a}, \B{b}, \B{c}}\delta_{\B{a}\in \mathcal{S}}\delta_{\B{b}\in \mathcal{S}}\delta_{\B{c}\in \mathcal{S}}\delta_{x\in \mathcal{S}}=\frac{1}{|\mathcal{S}|},
\end{align}
where $\B{x} = \B{a} + \B{b} + \B{c} \mod d$, and we've used the fact that since $\B{a}, \B{b}, \B{c}\in \mathcal{S}$, their symplectic products all vanish. We conclude in particular that the first four terms are insensitive to whether $d$ is odd or even.

Finally, we must tackle the fifth term
\begin{align}
	\sum_{\B{a}}\left|\sum_{\B{b}}\chi_{\B{b}}\chi^*_{\B{b}-2\B{a}}\right|^2 &= \frac{1}{|\mathcal{S}|^4}\sum_{\B{a}}\left|\sum_{\B{b}}\delta_{\B{b}\in \mathcal{S}}(-1)^{[\B{x}, \B{y}]}\delta_{\B{x}\in \mathcal{S}}\right|^2,
\end{align}	
where $\B{b}-2\B{a}=\B{x} + d\B{y}$. A term in the inner sum is non-zero if $\B{b},\B{x}\in \mathcal{S}$. We have
\begin{align}
\B{x}\equiv \B{b}-2\B{a}\pmod d \Longleftrightarrow \B{b}-\B{x} \equiv 2\B{a}	\pmod d,
\end{align}
and since $\mathcal{S}$ is closed under addition, $\B{b}-\B{x}\in \mathcal{S}$: in fact, any element of $\mathcal{S}$ can be written this way. Thus equivalently, we require $\B{b}, 2\B{a}\in \mathcal{S}$. Moreover, because 
\begin{align}
[\B{x}, \B{y}] &= \frac{1}{d}[\B{x}, \B{b}-2\B{a}]=\frac{2}{d}[\B{a}, \B{x}]	
\end{align}
(where we've used the fact that $[\B{x}, \B{b}]=0$), we have
\begin{align}
(-1)^{[\B{x}, \B{y}]}&=(e^{i \pi })^{\frac{2}{d}[\B{a}, \B{x}]}=\omega^{[\B{a},\B{x}]}\omega^{[\B{a},(\B{b}-2\B{a})\text{ mod  }d]}=\omega^{[\B{a},\B{b}-2\B{a}]}=\omega^{[\B{a}, \B{b}]},
\end{align}
since $\omega^{a b}=\omega^{a(x + dy)}=\omega^{a (b \text{ mod } d)}(\omega^d)^{ay}=\omega^{a (b \text{ mod } d)}$. Thus
\begin{align}
	\sum_{\B{a}}\left|\sum_{\B{b}}\chi_{\B{b}}\chi^*_{\B{b}-2\B{a}}\right|^2 &= \frac{1}{|\mathcal{S}|^4}\sum_{\B{a}}\left|\sum_{\B{b}\in \mathcal{S}}\omega^{[\B{a}, \B{b}]}\right|^2\delta_{2\B{a}\in \mathcal{S}}.
\end{align}	
At the same time, because it is a character sum of a finite abelian group,
\begin{align}
\sum_{\B{b}\in \mathcal{S}}\omega^{[\B{a}, \B{b}]}=\begin{cases} |\mathcal{S}|, & \text{if } \forall \B{b}\in \mathcal{S}:[\B{a}, \B{b}]=0\\
0, & \text{otherwise}.\end{cases}	
\end{align}
The former condition is that $\B{a}\in \mathcal{S}^\perp$, and so we have
\begin{align}
\sum_{\B{a}}\left|\sum_{\B{b}}\chi_{\B{b}}\chi^*_{\B{b}-2\B{a}}\right|^2 &= \frac{1}{|\mathcal{S}|^4}\sum_{\B{a}}|\mathcal{S}|^2\delta_{2\B{a}\in \mathcal{S}}\delta_{\B{a}\in \mathcal{S}^\perp}=\frac{|A_\mathcal{S}|}{|\mathcal{S}|^2},
\end{align}
where we define the set $A_\mathcal{S}$ as
\begin{align}
A_\mathcal{S} = \Big\{ \B{a} \in \mathbb{Z}_d^{2n}\ | \ 2\B{a}\in \mathcal{S}, 	 \forall \B{b}\in \mathcal{S}:[\B{a}, \B{b}]=0\Big\}.
\end{align}
Putting everything together, we find
\begin{align}
\trace\left(Q_B\Pi_{\mathcal{E}}^{\otimes 4}\Pi_{\text{sym}^4} \right) &= \frac{1}{24}\Bigg[\frac{3 d_B^2}{|\mathcal{S}|^3} + \frac{6 d_B}{|\mathcal{S}|^2}+ \frac{6 d_B}{|\mathcal{S}|^2}+ \frac{8}{|\mathcal{S}|}+ \frac{|A_\mathcal{S}|}{|\mathcal{S}|^2}\Bigg]\\
&= \frac{d_S\Big(|A_\mathcal{S}| d_S + d_B(3d_S^2+12d_S+8)\Big)}{24 d_B^2},
\end{align}
so that the average SE of the subspace with respect to the big space is  
\begin{align}
\mathbb{E}_U[M_B(\mathcal{E}_\mathcal{S}(U\psi U^\dagger))] 
&=1-d_B \binom{d_S+3}{4}^{-1} \trace\left(Q_B\Pi_{\mathcal{E}_\mathcal{S}}^{\otimes 4}\Pi_{\text{sym}^4}^B \right)\\
&=1-\frac{d_B(3d_S^2 + 12 d_S + 8) + |A_\mathcal{S}|d_S}{d_B(d_S+1)(d_S+2)(d_S+3)}\\
&= 1-\frac{3(d_S+2)}{(d_S+1)(d_S+3)} - \frac{\frac{d_S}{d_B}|A_\mathcal{S}|-4}{(d_S+1)(d_S+2)(d_S+3)},
\end{align}
where we've written the last in view of the fact that
\begin{align}
\mathbb{E}_U[M_S(U\psi U^\dagger)] &=\begin{dcases}
    1-\frac3{d_S+2}, &  d_S \text{ odd (multiqudit)} \\
    1-\frac{3(d_S+2)}{(d_S+1)(d_S+3)},   & d_S \text{ even (multiqudit)} \\
    1- \frac{4}{d_S+3}, & d_S=2^m \text{ (multiqubit)}.\\
\end{dcases}
\end{align}
\end{proof}

\section{\texorpdfstring{$|A_\mathcal{S}|$}{|As|} for a \texorpdfstring{$\mathbb{Z}_d$}{Zd}-invariant subspace}\label{gauge}

In \cref{exgauge}, we considered the set $\mathcal{S}$ given by all symplectic vectors of the form $(x,0,\dots,x,0) \in \mathbb{Z}_d^{2n}$ for $x\in \mathbb{Z}_d$. Our goal in this appendix is to determine the cardinality of the set $A_\mathcal{S} = \{ \mathbf{a} \in \mathbb{Z}_d^{2n} \mid 2\mathbf{a} \in \mathcal{S}, \mathbf{a} \in \mathcal{S}^\perp \}$.

On the one hand, it is immediate to see that $\mathcal{S}$ is generated by the single element $\B{g}=(1,0,\dots,1,0)\in\mathbb{Z}_d^{2n}$. On the other hand, we can write a generic element $\B{a} \in \mathbb{Z}_d^{2n}$ in the form $(x_1,y_1,\dots,x_n,y_n)$. For $\B{a}$ to be in $A_\mathcal{S}$, it must satisfy two conditions. The first condition is $2\mathbf{a}=(2x_1, 2y_1, \dots, 2x_n, 2y_n) \in \mathcal{S}$.  This tells us a) $2y_i \equiv 0 \pmod{d}$ for all $i=1, \dots, n$, and b) all the odd-position components must be equal, i.e., $2x_1 \equiv 2x_2 \equiv \dots \equiv 2x_n \pmod{d}$. The second condition, $\mathbf{a} \in \mathcal{S}^\perp$, is equivalent to requiring $[\mathbf{g}, \mathbf{a}]=0$. To see this, note that the symplectic product of a generic element $\B{b}\in \mathcal{S}$ is $[\B{b}, \B{a}]=\lambda[\B{g}, \B{a}]$ for some $\lambda \in \mathbb{Z}_d$. If $\B{a}\in \mathcal{S}^\perp$, then $[\B{g}, \B{a}]=0$, and so $[\B{b}, \B{a}]=0$. Conversely, if $ [\B{g},\B{a}]=0$, then $\lambda[\B{g}, \B{a}]=[\B{b}, \B{a}]=0$ for all $\B{b}\in \mathcal{S}$, and so $\B{a}\in \mathcal{S}^\perp$. Next we observe that
\begin{equation}
[\mathbf{g}, \mathbf{a}] = \sum_{i=1}^{n} (1 \cdot y_i - 0 \cdot x_i) = \sum_{i=1}^{n} y_i,
\end{equation}
which shows that the second condition is equivalent to $\sum_{i=1}^{n} y_i \equiv 0 \pmod{d}$. We can now count the number of vectors $\mathbf{a}$ that satisfy these simultaneous conditions, a task which depends on the parity of $d$. 

First, let's consider the case where $d$ is odd. In this case, $\text{gcd}(2, d) = 1$. The condition $2y_i \equiv 0 \pmod{d}$ implies that $y_i \equiv 0$ for all $i$. This choice, $(y_1, \dots, y_n)=(0, \dots, 0)$, trivially satisfies the sum condition $\sum y_i \equiv 0 \pmod d$. Thus, there is only one possibility for the $y$-components. For the $x$-components, the condition $2x_1 \equiv \dots \equiv 2x_n \pmod d$, given that $d$ is odd, simplifies to $x_1 \equiv \dots \equiv x_n \pmod d$. We can choose $x_1$ in $d$ ways, which then determines all other $x_i$. This gives $d$ possibilities for the $x$-components. The total cardinality is the product of the choices, so for odd $d$, $|A_S| = d \times 1 = d$.

Next, we consider the case where $d$ is even. Here, $\text{gcd}(2, d) = 2$. The condition $2y_i \equiv 0 \pmod{d}$ has two solutions for each $y_i$: $y_i=0$ and $y_i=d/2$. The sum condition $\sum y_i \equiv 0 \pmod{d}$ then requires that the number of $y_i$ equal to $d/2$ must be even. The number of ways to form a vector of length $n$ with components from $\{0, d/2\}$ with an even number of non-zero entries is $2^{n-1}$. This is the number of choices for the $y$-components. For the $x$-components, we begin by choosing $x_1$ in $d$ ways. The condition $2x_i \equiv 2x_1 \pmod{d}$ has $\text{gcd}(2, d)=2$ solutions for each of the $n-1$ variables $x_2, \dots, x_n$. This gives $d \times 2^{n-1}$ possibilities for the $x$-components. The total cardinality is the product $(d \cdot 2^{n-1}) \times (2^{n-1}) = d \cdot 2^{2n-2}=d\cdot4^{n-1}$.

\end{document}